\newcommand{\B}{\mathcal{B}}
\newcommand{\F}{\mathbb{F}}
\newcommand{\N}{\mathbb{N}}
\theoremstyle{plain}
\newtheorem{theorem}{Theorem}[section]
\newtheorem{lemma}[theorem]{Lemma}
\newtheorem{proposition}[theorem]{Proposition}
\theoremstyle{definition}
\newtheorem{definition}[theorem]{Definition}
\theoremstyle{remark}
\newtheorem{remark}[theorem]{Remark}
\newcommand{\cC}{\ensuremath{{\mathcal C}}} 
\newcommand{\cM}{\ensuremath{\mathcal{M}}}
\newcommand{\cP}{\ensuremath{\mathcal{P}}}
\newcommand{\cD}{\ensuremath{\mathcal{D}}}
\newcommand{\cB}{\mathcal{B}}
\newcommand{\supp}{\mathrm{supp}}
\newcommand{\bv}{\bm{v}}
\newcommand{\bx}{\bm{x}}
\newcommand{\by}{\bm{y}}
\newcommand{\bz}{\bm{z}}
\newcommand{\be}{\bm{e}}
\newcommand{\bc}{\bm{c}}
\newcommand{\bS}{\bm{S}}
\newcommand{\Eop}{\mathop{\mathbb E}}
\newcommand{\TRLC}{R_{\mathrm{RLC}}}
\newcommand{\RLC}{\mathrm{RLC}}
\newcommand{\PCLP}{\mathrm{PCLP}}
\newcommand{\Emp}{\mathsf{Emp}}
\newcommand{\PCRCP}{\mathrm{PCRCP}}
\newcommand{\xmark}{\ding{55}}%
\newcommand{\poly}{\mathrm{poly}}
\newcommand{\rank}{\ensuremath{\operatorname{rank}}}
\newcommand{\wt}{\mathrm{wt}}
\newcommand{\eps}{\varepsilon}
\renewcommand{\epsilon}{\varepsilon}
\newcommand{\suchthat}{:}
\DeclareMathOperator{\Tr}{Tr}
\newcommand{\inparen}[1]{\left(#1\right)}
\DeclareMathOperator{\eperiod}{\,\,\text{.}}
 \newcommand{\jm}[1]{}
 \newcommand{\jmi}[1]{}
 \newcommand{\nr}[1]{}
 \newcommand{\nri}[1]{}
 \newcommand{\chen}[1]{}
\begin{document}

\title{Randomness-Efficient Constructions of Capacity-Achieving List-Decodable Codes}
\author{Jonathan Mosheiff\thanks{Department of Computer Science, Ben-Gurion University. Research supported by an Alon Fellowship. Part of this work was conducted while the author was visiting the Simons Institute for the Theory of Computing. \href{mosheiff@bgu.ac.il}{mosheiff@bgu.ac.il}} \and Nicolas Resch\thanks{Informatics Institute, University of Amsterdam. Research supported by a Veni grant (VI.Veni.222.347) from the Dutch Research Council (NWO). Part of this work was conducted while the author was visiting the Simons Institute for the Theory of Computing.  \href{n.a.resch@uva.nl}{n.a.resch@uva.nl}} \and Kuo Shang\thanks{School of Electronic Information and Electrical Engineering, Shanghai Jiao Tong University. \href{billy63878@sjtu.edu.cn}{billy63878@sjtu.edu.cn}} \and Chen Yuan\thanks{School of Electronic Information and Electrical Engineering, Shanghai Jiao Tong University. \href{chen\_yuan@sjtu.edu.cn}{chen\_yuan@sjtu.edu.cn}}}
\date{\today}

\maketitle

\begin{abstract}
    In this work, we consider the task of generating list-decodable codes over small (say, binary) alphabets using as little randomness as possible. Specifically, we hope to generate codes achieving what we term the \emph{Elias bound}, which means that they are $(\rho,L)$-list-decodable with rate $R \geq 1-h(\rho)-O(1/L)$. A long line of work shows that uniformly random linear codes (RLCs) achieve the Elias bound: hence, we know $O(n^2)$ random bits suffice. Prior works (Guruswami and Mosheiff, FOCS 2022; Putterman and Pyne, arXiv 2023) demonstrate that just $O(Ln)$ random bits suffice, via puncturing of low-bias codes. These recent constructions are essentially combinatorial, and rely (directly or indirectly) on graph expansion.
    
    We provide two new constructions, which are \emph{algebraic}. Compared to prior works, our constructions are considerably simpler and more direct. Furthermore, our codes are designed in such a way that their \emph{duals} are also quite easy to analyze. Our first construction --- which can be seen as a generalization of the celebrated Wozencraft ensemble --- achieves the Elias bound and consumes $Ln$ random bits. Additionally, its dual code achieves the Gilbert-Varshamov bound with high probability, and both the primal and dual admit quasilinear-time encoding algorithms. The second construction consumes $2nL$ random bits and yields a code where both it \emph{and its dual} achieve the Elias bound. As we discuss, properties of a dual code are often crucial for applications of error-correcting codes in cryptography.

    In all of the above cases -- including the prior works achieving randomness complexity $O(Ln)$ -- the codes are designed to ``approximate'' RLCs. More precisely, for a given locality parameter $L$ we construct codes achieving the same $L$-local properties as RLCs. This allows one to appeal to known list-decodability results for RLCs and thereby conclude that the code approximating an RLC also achieves the Elias bound (with high probability). As a final contribution, we indicate that such a proof strategy is inherently unable to generate list-decodable codes of rate $R$ over $\F_q$ with less than $L(1-R)n\log_2(q)$ bits of randomness.
    \chen{I edit the abstract to makes the randomness bit explicit.}
    \nri{Looks good, I changed a couple things as well for wording.}
\end{abstract}

\section{Introduction} \label{sec:intro}
The basic task of coding theory is to define subsets of $\cC \subseteq [q]^n$, where $q \in \N$ is the \emph{alphabet size} and $n \in \N$ is the \emph{block-length}, that satisfy two conflicting desiderata. Firstly, the code $\cC$ should be as large as possible, as this corresponds to the amount of information that one transmits in $n$ symbol transmissions. But secondly, the elements of $\cC$, termed \emph{codewords}, should be as spread out as possible in order to minimize the likelihood that two distinct codewords are confused should errors be introduced. In this work, we will focus almost exclusively on \emph{linear} codes, in which case we require $q$ to be a prime power and insist that $\cC \leq \F_q^n$, i.e., $\cC$ is a subspace of the vector space $\F_q^n$. Unless otherwise mentioned, from now on all codes are linear. 

Typically, instead of directly working with the cardinality $|\cC|$ of a code, one analyzes its rate $R = \frac{\log_q|\cC|}n = \frac{\dim(\cC)}n$, which measures the amount of information transmitted per codeword symbol. To measure a code's error-resilience, various metrics can be used. The most basic (at least for the setting of worst-case errors) is $\cC$'s minimum distance $\delta := \min\{d(\bx,\by):\bx, \by \in \cC,\bx \neq \by\}$, where here and throughout $d(\bx,\by) := \frac1n |\{i \in \{1,2,\dots,n\}: x_i \neq y_i\}|$ is the (normalized) Hamming distance. A classical observation is that as long as $\rho<\delta/2$ fraction of symbols are corrupted, one can always uniquely-decode\footnote{At least, information-theoretically. Algorithmic decoding is a separate challenge.} to recover the original codeword. 

The first question one might ask, then, is what sort of tradeoffs one can achieve between rate and distance. A classical result due to Gilbert~\cite{G52} and Varshamov~\cite{V57} states that, for any $R,\delta$ satisfying $R < 1-h_q(\delta)$,\footnote{Here, $h_q(\cdot)$ denotes the $q$-ary entropy function, which we define formally in \Cref{sec:notation}.} there exist infinite families of codes of rate at least $R$ and distance at least $\delta$. We say that codes which achieve this tradeoff (or, in some cases, get $\eps$-close for some small $\eps$) achieve the \emph{GV bound}.

A natural relaxation of unique-decoding that we focus upon is \emph{list-decoding}: for a parameter $\rho \in (0,1-1/q)$ and an integer $L \geq 1$ we call a code $\cC$ $(\rho,L)$\emph{-list-decodable} if for any $\bz \in \F_q^n$, the number of codewords at distance at most $\rho$ from $\bz$ is less than $L$. In notation:
\[
    \forall \bz \in \F_q^n, ~~ |\{\bc \in \cC: d(\bc,\bz) \leq \rho\}| < L \ .
\]
Early work due to Elias and Wozencraft~\cite{Elias57,Wozencraft58,Elias91} proposed list-decodable codes as an object of study, largely as an intermediate target on the way to unique-decoding. In the past 30 years or so, list-decodable codes have seen increased attention due to their connections to other parts of theoretical computer science, particularly complexity theory, cryptography and pseudorandomness~\cite{goldreich1989hard,babai1990bpp,lipton1990efficient,kushilevitz1993learning,jackson1997efficient,sudan2001pseudorandom}. Note that the above discussion of unique-decodability implies that any code with distance $\delta$ is $(\delta/2,1)$-list-decodable. In particular, by choosing a code $\cC$ achieving the GV bound, we can have a rate $R<1-h_q(2\rho)$ code which is $(\rho,1)$-list-decodable.

If one allows the list-size parameter $L$ to grow, the list-decoding capacity theorem essentially says that we can correct up to \emph{twice as many} errors for the same rate. More precisely, there exist $(\rho,L)$-list-decodable codes of rate $1-h_q(\rho)-O(1/L)$.\nr{Johnathan, does this last bit make sense now?} Informally, one says that a code\footnote{Technically, one should speak of an infinite family of codes of increasing block-length whose rates have limit $R$. In this work, we will not be too careful with this formalism, but it should be clear that our constructions lead to such infinite families.} \emph{achieves list-decoding capacity} if its rate is arbitrarily close to $1-h_q(\rho)$ with list-size $L \leq \poly(n)$. For our purposes, we are interested in codes that achieve the tradeoff achieved by random codes. Introducing some terminology, we will say a code construction $\cC$ achieves the \emph{Elias bound} if it is $(\rho,L)$-list-decodable and has rate at least $1-h_q(\rho)-O(1/L)$. 

We also mention that a generalization of list-decoding, termed list-recovery, has seen increasing attention in recent years. It was originally abstracted as a useful primitive in list-decoding concatenated codes~\cite{GuruswamiI01,GuruswamiI02,GuruswamiI03,GuruswamiI04}. However, it has recently proved itself to merit investigation in its own right, finding applications in cryptography~\cite{HIOS15,holmgren2021fiat}, randomness extraction~\cite{guruswami2009unbalanced}, hardness amplification~\cite{doron2020nearly}, group testing~\cite{INR10,NPR12}, streaming algorithms~\cite{doron2022high}, and beyond. 
The interested reader is directed to \Cref{sec:coding-theory} for the precise definition of list-recovery; for now, suffice it to say that all of the preceding and ensuing discussion generalizes cleanly to list-recovery as well. 

\medskip

An outstanding problem in the theory of error-correcting codes is to provide explicit\footnote{While we will not be too precise with the meaning of ``explicit'' in this work, we informally mean that a description of the code can be constructed deterministically in time polynomial in $n$.} constructions of capacity-achieving list-decodable codes. The problem in the regime of ``large alphabet'' has seen tremendous progress in the last quarter of a century. Since Guruswami and Rudra demonstrated that folded Reed-Solomon codes achieve list-decoding capacity~\cite{GR08}, a long line of work has now led to explicit constructions of capacity-achieving codes: namely, codes of rate $R$ which are $(1-R-\eps,\exp(\poly(1/\eps)))$-list-decodable, assuming $q \geq (1/\eps)^{\Omega(1/\eps^2)}$~\cite{GR21}. While achieving optimal tradeoffs between all the parameters involved is still not completely resolved, it is fair to say that we have very satisfactory constructions, assuming $q$ is sufficiently large. However, when it comes to explicitly constructing list-decodable codes over the binary alphabet, the existing results are quite paltry. The only notable successes concern the regime of very high noise, where one hopes to decode at radius $\frac12-\eps$ with codes of rate $\Omega(\eps^2)$, matching (up to constant factors) the rate-distance tradeoff achieved by random linear codes. The current state of the art is Ta-Shma's code~\cite{ta2017explicit} achieving rate $\Omega(\eps^{2+o(1)})$, for which we now additionally have efficient unique- and list-decoding~\cite{GQST20,GST21} algorithms. 

\medskip

In light of the difficulty of explicitly constructing list-decodable codes over small alphabets, we focus on a more modest goal: let's construct them randomly \emph{using as little randomness as possible}. And in this case, we would like to achieve the Elias bound, i.e., for $(\rho,L)$-list-decodability the rate $R$ should be at least $1-h_q(\rho)-O(1/L)$. For example, the classical argument of Elias -- which argues that random subsets $\cC \subseteq \{0,1\}^n$ of size $2^{Rn}$ are $(\rho,L)$-list-decodable assuming $R < 1-h_2(\rho)-1/L$ -- shows that with exponentially many random bits we can have such a code. This generalizes to $R < 1-h_q(\rho)-1/L$ for general alphabet size $q$.


If rather than a plain random code one instead samples a random linear code (RLC), a long line of works \cite{ZyablovP81,GuruswamiHSZ02,GHK11,CheraghchiGV13,Wootters13,RudraW14a,RudraW18,LiWootters,GuruswamiLMRSW20,alrabiah2023randomly} shows that they achieve the Elias bound in most parameter regimes. In particular, \cite{LiWootters,GuruswamiLMRSW20} settles the binary case. Hence, $O(n^2)$ random bits are sufficient. 

\medskip 
To push beyond this, \cite{MRRSW} shows that random low-density parity-check (LDPC) codes also achieve list-decoding capacity efficiently, and such codes can be sampled with $O(n\log n)$ bits of randomness. This work actually argues something stronger: namely, any \emph{local property} that is satisfied by a random linear code of rate $R$ with high probability is also satisfied by a random LDPC code of rate $R-o(1)$. 

While we precisely define local properties in \Cref{sec:local-props}, for now we give the following intuitive explanation: for a given locality parameter $\ell=O(1)$, $\ell$-local properties are defined by excluding a collection of ``forbidden subsets'' of size $\ell$. In the case of list-decodability, the collection would be defined as the family of all $L$-tuples of vectors $\bx_1,\dots,\bx_L$ which all lie in a Hamming ball of radius $\rho$. That is, $(\rho,L)$-list-decodability is an $L$-local property. The same in fact holds for $(\rho,\lambda,L)$-list-recoverability: it is also an $L$-local property. 

\medskip
Subsequent work by Guruswami and Mosheiff~\cite{GM22} provides a means of sampling codes achieving list-decoding capacity efficiently with only $O(n)$ randomness. In fact, as is the case for LDPC codes, these codes achieve the same local properties as RLCs. First, note that an RLC is nothing but a random puncturing of the Hadamard code.\footnote{Recall that the Hadamard code encodes a message $\bm{m} \in \F_2^k$ into a length-$2^k$ codeword by computing $\langle \bm{m}, \bx\rangle$ for every $x\in \F_2^k$.} Observe further that the Hadamard code is optimally balanced, in the sense that every non-zero codeword has weight precisely $1/2$. Gurusawmi and Mosheiff suggest then puncturing some other explicitly chosen ``mother code'' of block-length $N$, and so long as this code is nearly balanced in the sense that all non-zero codewords have weight $\approx 1/2$, then a random puncturing will again ``look like'' an RLC from the perspective of local properties. Assuming $N \leq \poly(n)$, then we need $n\log N = O(n\log n)$ random bits to sample such a code, matching the guarantee for LDPC codes. To achieve $O(n)$ randomness, one must ensure $N \leq O(n)$ (by choosing, e.g., Ta-Shma's codes~\cite{ta2017explicit} for the mother code) and then puncturing without replacement: one thus requires only $\log \binom{N}{n} = O(n)$ bits of randomness. 

\medskip

Very recently, another derandomization has been offered. Putterman and Pyne~\cite{PP23} demonstrate that instead of choosing each coordinate independently one can choose them via an expander random walk. This then means that we only require $O(n\log d)$ bits of randomness to sample the code, where $d$ is the degree of the expander graph. Assuming $d=O(1)$ -- which is achievable if one is interested in local properties -- we in particular find that $O(n)$ bits of randomness suffice. 


\medskip 

Thus, we currently know how to construct list-decodable binary codes achieving capacity efficiently with $O(n)$ bits of randomness.\nr{Removed $O_L(n)$ notation -- I don't think we need it.} As elaborated below, this seems like a hard barrier for current techniques.

The above constructions are quite ``indirect,'' requiring the existence of a sufficiently nice mother code that can then be punctured. While explicit constructions of such highly balanced codes are known, the constructions are all quite nontrivial. This status naturally leads us to wonder if we can provide more ``direct'' randomness-efficient constructions of binary codes achieving the Elias bound.

Furthermore, we also find motivations stemming from code-based cryptography. In this setting, one would often like to generate codes that ``look like'' random codes, but in fact admit efficient descriptions, as the description of the code is often some sort of public parameter that must be known by all parties making use of the cryptographic scheme. We elaborate upon this connection below. And in this case, one would often like the \emph{dual} code to also look random (again, we discuss this motivation further below). We observe that while the dual of an RLC is again an RLC -- and hence will also satisfy the Elias bound with high probability -- the above constructions (LDPC or puncturing-based) do not have such guarantees. And indeed, the dual of an LDPC code certainly cannot even have linear minimum distance! As for the puncturing-based constructions, it is unclear to us whether random puncturing can yield the Elias bound and good dual distance; at the very least, such a proof would require new techniques.


\subsection{Our Results} \label{sec:results}

In this work, we provide two new randomized constructions of codes achieving the Elias bound and consuming only $O(Ln)$ bits of randomness. In fact, for any (constant) locality parameter $\ell$, we show that these codes are $\ell$-\emph{locally similar} (see Definition \ref{def:local-sim}) to RLCs, which implies that any $\ell$-local property satisfied by RLCs with high probability is also satisfied by our codes with high probability (in fact, the success probability will be of the form $1-q^{-\Omega(n)}$). In particular, taking $\ell=L$ implies that all our codes achieve list-decoding capacity efficiently with high probability. We provide our constructions for general (but constant) field size $q$, although we are mostly motivated by the binary case. 

The notions of \emph{local property} and \emph{local similarity} are thoroughly defined and discussed in \cref{sec:local-props}. For concreteness, we give a shorter and less precise description here, and for simplicity we restrict attention here to the binary case. Fix a \emph{locality parameter} $\ell\in \N$ and consider the set of all $n \times \ell$ binary matrices. We generally think of $\ell$ as constant while $n$ tends to infinity. A code $\cC\subseteq \F_2^n$ is said to contain a matrix $A\in \F_2^{n\times \ell}$ if it contains all the columns of $A$ as codewords. We group the matrices in $\F_2^{n\times \ell}$ according to their \emph{row distribution}. More precisely, we associate with $A\in \F_2^{n\times \ell}$ a distribution $\Emp_A \sim \F_2^\ell$ that yields a vector $x\in \F_2^\ell$ proportionally to the number of times that $x$ appears as a row  in $A$, namely, $\tau(x) = \frac{|\{i\in [n] \suchthat A_i = x\}|}n$, where $A_1,\dots,A_n \in \F_2^\ell$ denote $A$'s rows. We denote the set of all matrices in $\F_2^{n \times \ell}$ with row distribution $\tau$ by $\cM_{n,\tau}$. We can now define the notion of local-similarity to an RLC for binary codes.

\begin{definition}[Local similarity to RLC in the binary case]\label{def:LocalSimilarityIntro}
     Let $\cC \leq \F_2^n$ be a linear code sampled from some ensemble. We say that $\cC$ is $\ell$\emph{-locally-similar to an RLC of rate $R$} if, for every $1 \leq b \leq \ell$ and every distribution $\tau \sim \F_2^b$ with $\dim(\tau)=b$, we have
    \[
        \Eop_{\cC}\left[|\{A \in \cM_{n,\tau}: A \subseteq \cC\}|\right] \leq 2^{(H_2(\tau)-b(1-R))n} \ .
    \]
    Above, $H_2(\tau)$ denotes the entropy of the distribution $\tau$, measured in bits.\nr{Explained $H_2(\cdot)$ notation here.} 
\end{definition}
Less formally, $\cC$ is locally similar to an RLC if, for every $\tau$, the expected number of matrices from $\cM_{n,\tau}$ in $\cC$ is not much larger than that in an RLC. The motivation for this definition is that local-similarity of $\cC$ to an RLC implies that $\cC$ almost surely satisfies every \emph{local property} (a notion formulated in \Cref{def:local-prop}) that is satisfied by an RLC  with high probability. As important motivating special cases, we note that list-decodability and list-recoverability are both local properties; this is established in e.g.~\cite{MRRSW,resch2020list}. Therefore, we can morally say that any code satisfying \Cref{def:LocalSimilarityIntro} is likely to be list-decodable and list-recoverable with similar paramters to those of an RLC. In particular, such a code is likely to achieve the Elias bound. 

\medskip

We now turn to describing our constructions. In contrast to prior works, neither of our constructions rely on an explicit ``mother code'' which we then puncture, but are instead built ``from scratch.'' Our constructions also have the pleasing property of being rather simple. A final major bonus of our codes is that their \emph{duals} also satisfy non-trivial properties: for the first construction, its dual achieves the GV bound with high probability; for the second, its dual is also $\ell$-locally similar to RLCs!

\medskip

Our first construction, which yields a code over $\F_q$ of block-length $n$, uses \emph{linearized polynomials} $f(X)$ with $q$-degree at most $\ell-1$. That is, $f(X)$ is of the form $\sum_{i=0}^{\ell-1}f_iX^{q^i}$\jm{Rev 2 complained that the fields involved are not clear in this paragraph. Let's make it clearer.} where each $f_i \in \F_{q^n}$, the degree $n$ extension of $\F_q$. As we review in \Cref{sec:alg-concepts}, such linearized polynomials define $\F_q$-linear maps $\F_{q^n} \to \F_{q^n}$. That is, for all $a,b \in \F_q$ and $\alpha,\beta \in \F_{q^n}$, we have $f(a\alpha + b\beta) = af(\alpha)+bf(\beta)$. The code is sampled by sampling the coefficients $f_i \in \F_{q^n}$ independently and uniformly at random. In particular, this requires only $\ell \lceil n \log_2q\rceil$ uniformly random bits.

To provide codes with rate $R = k/n$, we fix an $\F_q$-linear subspace $V\subseteq \F_{q^n}$ of dimension $k$. The code is then defined as $\{\varphi(f(\alpha)):\alpha \in V\}$, where $\varphi:\F_{q^n} \to \F_q^n$ is any bijective $\F_q$-linear map. Recall that such a map exists as $\F_{q^n}$ is of dimension $n$ as a vector space over $\F_q$, and any two vector spaces over the same field of the same dimension are isomorphic. For example, if $\omega_1,\dots,\omega_n$ is a basis for $\F_{q^n}$ over $\F_q$, we could set $\varphi:\sum_{i=1}^nx_i \omega_i \mapsto (x_1,\dots,x_n)$. We say that $\cC$ is a \emph{pseudorandom code from linearized polynomials} of rate $R$ and degree $\ell$, or just $\PCLP(R,\ell)$ for short, if it is sampled according to the above procedure.\footnote{The dependence on $V$ and $\varphi$ is not made explicit in this notation, as they will turn out to have no impact on our results regarding $\cC$'s combinatorial properties.} Requiring the polynomial $f$ to be linearized guarantees that the resulting code is linear, as desired. 

Not only are we able to show that such codes achieve the Elias bound with high probability, we also show that their \emph{dual code} achieves the GV bound. As we elaborate upon below, for cryptographic applications a code's dual distance is often a crucial parameter of interest. We remark that, prior to our work, we are not aware of any construction of binary codes consuming $O(n)$ randomness outputting codes with both distance and dual distance lying on the GV bound.

\medskip

Having realized that with $\ell n$ randomness we can construct a binary code that is $\ell$-locally similar to RLCs with dual code achieving the GV bound (which informally follows from being $1$-locally similar to RLCs), it is natural to wonder if it is possible to get both primal and dual code $\ell$-locally similar to RLCs. We emphasize again that this would imply that both the primal and the dual code achieve the Elias bound for list size $L=\ell$. The answer to this question is yes: our second construction has exactly this property. We now turn to describing this construction. 

First, fix distinct elements $\alpha_1,\dots,\alpha_n \in \F_{q^n}$ (they need not be linearly independent over $\F_q$). Let $\gamma:\F_{q^n} \to \F_q^n$ be a full-rank linear map (as with $\varphi$ before), and let $\eta:\F_{q^n} \to \F_q^k$ be any surjective $\F_q$-linear map. For example, if $\omega_1,\dots,\omega_n$ is a basis for $\F_{q^n}$ over $\F_q$, we could set $\gamma:\sum_{i=1}^nx_i \omega_i \mapsto (x_1,\dots,x_n)$ and $\eta:\sum_{i=1}^nx_i \omega_i \mapsto (x_1,\dots,x_k)$.

For a given rate $R=k/n$, we choose independently \emph{two} polynomials $f(X),g(X) \in \F_{q^n}[X]$ uniformly amongst all such polynomials of degree at most $\ell-1$ (unlike in the previous construction, these polynomials \emph{need not} be linearized). Note that this requires only $2\ell\lceil n\log_2q\rceil$ uniformly random bits. We then define the following two matrices $G',G'' \in \F_q^{k \times n}$: 
\begin{itemize}
    \item For each $i \in [k]$, the $i$-th row of $G'$ is defined to be $\gamma(f(\alpha_i))$.
    \item For each $j \in [n]$, the $j$-th column of $G''$ is defined to be $\eta(g(\alpha_j))$. 
\end{itemize}
We then define $G := G'+G''$ and set $\cC := \{\bx G \suchthat \bx \in \F_q^k\}$. We call a code constructed in this way a \emph{pseudorandom code from row and column polynomials} of rate $R$ and degree $\ell$, or just $\PCRCP(R,\ell)$, if it is sampled according to this procedure.\footnote{For reasons analogous to before, this notation omits mention of $\gamma$, $\eta$ and $\alpha_1,\dots,\alpha_n$.} Informally, the matrix $G'$ is responsible for ensuring that the primal code is $\ell$-locally similar to RLCs, while the matrix $G''$ guarantees the same holds for the dual code. In particular, if we had just set $G=G'$ then we would have had just the primal code $\ell$-locally similar to RLCs, while if we just set $G=G''$ then only the dual code would be $\ell$-locally similar to RLCs. 

\medskip
This second construction thus yields the following result. 

\begin{theorem} [{Informal; follows from \Cref{thm:row-column-polynomial-codes}}]
    Let $\ell,n\in \N$, $R \in (0,1)$ for which $Rn \in \N$ and $q$ is a prime power. Let $\cP,\cP^\perp$ be $\ell$-local properties, and suppose that an $\RLC(R)$ satisfies $\cP$ with probability $1-q^{-\Omega(n)}$ and an $\RLC(1-R)$ satisfies $\cP^\perp$ with probability $1-q^{-\Omega(n)}$. Then, for sufficiently large $n$, there exists a randomized procedure consuming $O(\ell n\log q)$ bits of randomness outputting a code $\cC$ such that, with probability at least $1-q^{-\Omega(n)}$:
    \begin{itemize}
        \item $\cC$ has rate $R$;
        \item $\cC$ satisfies $\cP$; and
        \item $\cC^\perp$ satisfies $\cP^\perp$, where $\cC^\perp$ is the code dual to $\cC$.
    \end{itemize}
\end{theorem}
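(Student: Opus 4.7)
The plan is to apply the $\PCRCP(R,\ell)$ construction described above, which consumes $2\ell \lceil n\log_2 q\rceil = O(\ell n \log q)$ uniformly random bits. It then suffices to establish two complementary local-similarity claims: \textbf{(i)} $\cC$ is $\ell$-locally similar (in the sense of \Cref{def:local-sim}) to an $\RLC(R)$, and \textbf{(ii)} $\cC^\perp$ is $\ell$-locally similar to an $\RLC(1-R)$. Combined with the hypotheses that an $\RLC(R)$ satisfies $\cP$ and an $\RLC(1-R)$ satisfies $\cP^\perp$ with probability $1-q^{-\Omega(n)}$, $\ell$-local similarity transfers these guarantees to $\cC$ and $\cC^\perp$, and a union bound closes the argument.

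\textbf{Primal similarity via row-randomness of $G$.}
The workhorse for (i) is the claim that any $\ell$ rows of $G$ form a fully uniform $\ell\times n$ submatrix of $\F_q^{\ell\times n}$. Indeed, row $i$ of $G'$ is $\gamma(f(\alpha_i))$, and since $f\in\F_{q^n}[X]$ is a uniformly random polynomial of degree less than $\ell$, polynomial interpolation yields that for any $\ell$ distinct indices $i_1,\ldots,i_\ell\in[k]$ the evaluations $(f(\alpha_{i_1}),\ldots,f(\alpha_{i_\ell}))$ are uniform over $\F_{q^n}^\ell$; applying the $\F_q$-linear bijection $\gamma$ componentwise gives a uniform $\ell\times n$ submatrix of $G'$. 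Since $g$ is independent of $f$ and $G''$ is a function of $g$ alone, conditioning on any outcome of $G''$ shifts this submatrix by a fixed matrix, preserving uniformity. This matches the row distribution of the generator of an $\RLC(R)$ and establishes (i).

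\textbf{Dual similarity via column-randomness of $G$.}
By a symmetric argument, any $\ell$ columns of $G$ form a fully uniform $k\times \ell$ submatrix. Column $j$ of $G''$ is $\eta(g(\alpha_j))$; polynomial interpolation again gives joint uniformity of $(g(\alpha_{j_1}),\ldots,g(\alpha_{j_\ell}))$ in $\F_{q^n}^\ell$, and the surjective $\F_q$-linear map $\eta$ sends uniform in $\F_{q^n}$ to uniform in $\F_q^k$ (each fiber has size $q^{n-k}$). Independence of $f$ from $g$ lets us absorb the $G'$ contribution as a fixed column-wise shift. Since $\cC^\perp$ is precisely the space of $\F_q$-linear dependencies among the columns of $G$, the column-level uniformity of $G$ matches the column distribution of a uniform generator, yielding that $\cC^\perp$ is $\ell$-locally similar to an $\RLC(1-R)$.

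\textbf{Conclusion and main obstacle.}
Invoking the transfer principle for $\ell$-local similarity twice and union-bounding, we obtain that $\cC$ satisfies $\cP$ and $\cC^\perp$ satisfies $\cP^\perp$ simultaneously with probability $1-q^{-\Omega(n)}$. The rate claim follows from a separate rank argument: if the $\alpha_i$'s are chosen so that $\alpha_1,\ldots,\alpha_k$ are $\F_q$-linearly independent in $\F_{q^n}$, then for each non-zero $\bc\in\F_q^k$ the vector $\bc G$ is uniform in $\F_q^n$, so a union bound gives $\Pr[\rank G < k] \leq q^{k-n} = q^{-\Omega(n)}$. The main technical hurdle I anticipate is formalizing the passage from ``any $\ell$ rows (resp.\ columns) of $G$ are jointly uniform'' to the full $\ell$-local similarity of $\cC$ (resp.\ $\cC^\perp$): one must verify that for an $\ell$-tuple of codewords arising from messages whose combined support spans more than $\ell$ coordinates, the joint distribution still matches the RLC baseline. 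The natural approach is to decompose $XG = XG' + XG''$ for the $\ell\times k$ message matrix $X$ and argue that the independent contributions of $G'$ (which contributes the good row behavior) and $G''$ (which contributes the good column behavior) jointly span the correct subspace of $(\F_q^n)^\ell$ prescribed by $\rank X$; this is precisely the step where the combined construction, rather than either half alone, is essential.
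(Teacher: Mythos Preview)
Your high-level plan and randomness count are correct, and you have correctly isolated the delicate step: $\ell$-wise joint uniformity of the rows of $G$ does \emph{not} by itself imply that $XG$ is uniform for an arbitrary rank-$b$ message matrix $X\in\F_q^{b\times k}$ whose rows are supported on more than $\ell$ coordinates, and it is precisely the latter statement (for all such $X$) that one feeds into \Cref{prop:prob-bound-loc-sim} via the union bound $\Pr[A\subseteq\cC]\le\sum_{X}\Pr[XG=A]$. Your proposed resolution of this gap, however, is not the route the paper takes, and in fact runs counter to the actual structure of the argument.

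The paper does \emph{not} combine $G'$ and $G''$ to handle general $X$. It proves directly that for \emph{every} rank-$b$ matrix $X\in\F_q^{b\times k}$ with $b\le\ell$ (not just row-selection matrices) the product $XG'$ is already uniform over $\F_q^{b\times n}$. Row $j$ of $XG'$ equals $\gamma\bigl(\sum_{r=0}^{\ell-1} f_r \sum_{i} X_{ji}\alpha_i^{r}\bigr)$, so the event $XG'=Y$ becomes the $\F_{q^n}$-linear system $XV\mathbf f=\mathbf z$, where $V=(\alpha_i^{r})_{i\in[k],\,0\le r<\ell}$ is Vandermonde and $\mathbf f=(f_0,\dots,f_{\ell-1})$ is uniform; the crux is then to argue that $XV\in\F_{q^n}^{b\times\ell}$ has rank $b$, whence $XV\mathbf f$ is uniform in $\F_{q^n}^{b}$. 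Once this is in hand, $XG=XG'+XG''$ is uniform simply because $G''$ is a function of $g$ alone and contributes a fixed shift after conditioning. The dual direction is symmetric: one shows $G''X$ is uniform for every rank-$b$ $X\in\F_q^{n\times b}$ by the same Vandermonde manoeuvre on columns, with $G'$ now playing the role of the fixed shift. Thus your intuition that ``the combined construction \ldots\ is essential'' at this step is inverted: $G'$ alone carries the primal direction for arbitrary $X$, $G''$ alone carries the dual, and the two halves coexist only so that both directions are covered simultaneously. Finally, your rate argument adds an $\F_q$-linear-independence hypothesis on $\alpha_1,\dots,\alpha_k$ that the paper does not impose; the rate bound there is just the $b=1$ instance of the uniformity of $XG$ together with a union bound over $\F_q^k\setminus\{\bm 0\}$.
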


As mentioned above, we do know of other code ensembles sampled with linear randomness that share local properties with RLCs. However, we are not aware of any other code ensembles for which the \emph{dual} code also shares local properties with RLCs. 

\medskip
Lastly, we mention one other pleasing feature of our first construction based on linearized polynomials. Namely, a careful choice of representation for $\F_{q^n}$ over $\F_q$ allows one to view the task of encoding a message as a constant number of polynomial multiplications, which can be computed in $O(n\log n)$ time via standard FFT-type methods. Thus, we can also claim the following result.

\begin{theorem} [{Informal; follows from \Cref{thm:PCLP-main} and \Cref{prop:PCLP-sampling-and-encoding}}]
    Let $\ell,n \in \N$, $R \in (0,1)$ for which $Rn \in \N$ and $q$ a prime power. Let $\cP$ be an $\ell$-local property, and suppose that an $\RLC(R)$ satisfies $\cP$ with probability $1-q^{-\Omega(n)}$. Then, for sufficiently large $n$, there exists a randomized procedure consuming $O(\ell n\log q)$ bits of randomness outputting a code $\cC$ such that, with probability at least $1-q^{-\Omega(n)}$:
    \begin{itemize}
        \item $\cC$ has rate $R$;
        \item $\cC$ satisfies $\cP$;
        \item $\cC^\perp$ achieves the GV-bound;
        \item $\cC$ is encodable in $O(n\log n)$ time.
    \end{itemize}
\end{theorem}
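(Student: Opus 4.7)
I would instantiate $\cC$ as a $\PCLP(R,\ell)$, so that the randomness bound $\ell \lceil n \log_2 q\rceil = O(\ell n \log q)$ is immediate from the construction. Bullet (1) (rate exactly $R$) reduces to showing $\alpha \mapsto f(\alpha)$ is injective on $V$: for any fixed non-zero $\alpha \in \F_{q^n}$, the map $(f_0,\dots,f_{\ell-1}) \mapsto f(\alpha) = \sum_{i} f_i \alpha^{q^i}$ is $\F_{q^n}$-linear and surjective (since $\alpha^{q^i} \neq 0$), so $f(\alpha)$ is uniform over $\F_{q^n}$ and vanishes with probability $q^{-n}$. A union bound over $V \setminus \{0\}$ gives injectivity except with probability at most $q^{-(1-R)n}$, in which case $|\cC| = q^{Rn}$.

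The heart of bullet (2) is the Moore determinant identity: the matrix $(\alpha_j^{q^i})_{0 \le i < \ell,\, 1 \le j \le \ell}$ is invertible over $\F_{q^n}$ whenever $\alpha_1,\dots,\alpha_\ell \in \F_{q^n}$ are $\F_q$-linearly independent. This makes the evaluation $(f_0,\dots,f_{\ell-1}) \mapsto (f(\alpha_1),\dots,f(\alpha_\ell))$ an $\F_{q^n}$-linear bijection, so the output is uniform over $\F_{q^n}^\ell$. Applying $\varphi$ coordinate-wise shows $(\varphi(f(\alpha_1)),\dots,\varphi(f(\alpha_\ell)))$ is uniform on $(\F_q^n)^\ell$, which matches the joint law of $(\alpha_1 G,\dots,\alpha_\ell G)$ in an $\RLC(R)$ (identifying $V$ with $\F_q^k$ via a fixed basis); for $\F_q$-dependent $\alpha$-tuples, the same linear relations pass through to the codewords on both sides, so the joint distributions agree in general. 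This is $\ell$-local similarity to $\RLC(R)$, which by the hypothesis on $\cP$ transfers $\cP$ to $\cC$ with failure probability $q^{-\Omega(n)}$.

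For bullet (3), I compute $\Pr[\bm{z} \in \cC^\perp]$ for a fixed non-zero $\bm{z} \in \F_q^n$ and union bound. Let $\zeta \in \F_{q^n}$ be the unique element with $\langle \varphi(\beta), \bm{z}\rangle = \Tr(\beta \zeta)$ for all $\beta \in \F_{q^n}$; nondegeneracy of the trace pairing gives $\zeta \neq 0$. Then $\bm{z} \in \cC^\perp$ iff $\Tr(f(\alpha)\zeta) = 0$ for every $\alpha \in V$. Using $\Tr(y) = \Tr(y^{q^{n-i}})$ on each summand rewrites this as $\Tr(\alpha \cdot \xi) = 0$ for all $\alpha \in V$, where $\xi := \sum_{i=0}^{\ell-1} g_i \zeta^{q^{n-i}}$ with $g_i := f_i^{q^{n-i}}$ again i.i.d.\ uniform on $\F_{q^n}$ (Frobenius is a bijection). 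Since $\F_{q^n}$ is one-dimensional over itself and each $\zeta^{q^{n-i}}$ is non-zero, $\xi$ is uniform over $\F_{q^n}$, so the condition $\xi \in V^{\circ}$ (the trace-annihilator of $V$, of size $q^{n-k}$) holds with probability exactly $q^{-Rn}$. Union bounding over $\bm{z}$ of weight at most $\delta n$ for any $\delta < h_q^{-1}(R)$ gives the dual GV bound except with probability $q^{-\Omega(n)}$.

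Bullet (4) follows by representing $\F_{q^n}$ in a normal basis, so that Frobenius $\alpha \mapsto \alpha^q$ is a cyclic coordinate shift ($O(n)$ time) and multiplication in $\F_{q^n}$ runs in $O(n \log n)$ via standard FFT-based polynomial arithmetic; evaluating $f$ at a single point is then $\ell$ Frobenius shifts plus $\ell$ multiplications, giving $O(\ell n \log n) = O(n \log n)$ encoding time for constant $\ell$. I expect the third paragraph to be the main technical obstacle, as it is the step where the apparent complication of controlling $k \gg \ell$ codewords simultaneously is defused by the Frobenius-trace rewriting, collapsing the analysis to the uniform distribution of a single element of $\F_{q^n}$.
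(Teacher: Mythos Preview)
Your proposal is correct, and for bullets (1) and (2) it coincides with the paper's argument (the Moore-matrix rank fact in \Cref{lem:linearized-poly-evals} and the ensuing $\ell$-local similarity in \Cref{lem:prob-matrix-contained-linearized}).

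Your treatment of bullet (3), however, is genuinely different from the paper's and arguably cleaner. The paper proceeds structurally: it introduces the dual basis $\beta_1,\dots,\beta_n$ and the auxiliary polynomial $g(X)=f_0^{-1}f(X)$, proves in \Cref{lem:dual-characterization} that $\cC^\perp=\psi(f_0^{-1}W)$ with $W=g(V)^\perp$, and then observes that (conditioned on $f_0\neq 0$ and $\dim f(V)=k$) this is distributed as a $\PCLP(1-R,1)$ code, to which the $\ell=1$ case of the primal analysis applies. You instead compute $\Pr[\bm z\in\cC^\perp]$ directly: using the Frobenius-invariance of $\Tr$ to rewrite $\Tr(f(\alpha)\zeta)=\Tr(\alpha\,\xi)$ with $\xi=\sum_i f_i^{q^{n-i}}\zeta^{q^{n-i}}$, you reduce the event to $\xi\in V^\circ$ for a single uniform element $\xi\in\F_{q^n}$, giving $\Pr[\bm z\in\cC^\perp]=q^{-Rn}$ in one line. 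This avoids the dual-basis lemma and the conditioning on $f_0\neq 0$ entirely; the trade-off is that the paper's route exhibits the dual code's structure explicitly (it is itself a $\PCLP$ code), which your direct computation does not reveal.

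For bullet (4) there is a small wrinkle: you invoke a normal basis so that Frobenius becomes a cyclic shift, but then assert that \emph{multiplication} in $\F_{q^n}$ is $O(n\log n)$ ``via standard FFT-based polynomial arithmetic.'' FFT gives $O(n\log n)$ multiplication in the \emph{polynomial} basis $1,\lambda,\dots,\lambda^{n-1}$, not in a normal basis; general normal-basis multiplication in $O(n\log n)$ is not standard. The paper sidesteps this by working in the polynomial basis throughout (\Cref{prop:PCLP-sampling-and-encoding}): it computes $\alpha^{q^i}$ iteratively by the map $\sum_j c_j X^j\mapsto \sum_j c_j X^{jq}\bmod m(X)$, which is a reduction costing $O(n\log n)$, then multiplies by the $f_i$'s via FFT. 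Your argument is easily repaired by adopting the same representation.
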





\paragraph{Cryptographic motivation.}\jm{Rev 3 says that the crypto motivation here is pretty vague. I tend to agree. Can we say something more concrete?} We remark that codes that ``look like random linear codes'' but are in fact samplable with less randomness are highly motivated by cryptographic considerations. And in fact, achieving good dual distance is often a crucial desideratum: the security of a cryptosystem is typically tied to the dual distance of the code, whether this is provably the case (i.e., with secret-sharing schemes \cite{CDDFS15},\cite{CXY20}) or plausibly the case (i.e., the linear tests framework for learning parity with noise~\cite{BCGIKRS22}). However, codes that require less randomness to generate allow for reduced public key sizes: the sizes of keys is typically the major drawback of post-quantum public-key cryptosystems. Hence the popularity of, e.g., quasi-cyclic~\cite{HQC} and moderate-density parity-check codes~\cite{BIKE}.

While for McEliece-type encryption schemes~\cite{M78} an important requirement is that the code admits an efficient decoding algorithm, this is not in fact required for recent applications of error-correcting codes in multi-party computation -- e.g., in the context of pseudorandom correlation generators (PCGs). In fact, one typically hopes that such codes \emph{do not} admit efficient decoding~\cite{BIPSW18,DGHIKSZ21}. A current ``rule-of-thumb'' is that the employed code should have good dual distance. In our view, a much more satisfying guarantee is that the dual code in fact shares more sophisticated properties with random linear codes, e.g., list-decodability/-recoverability, as our techniques can establish. 

We further remark that many recent code constructions for PCGs~\cite{boyle2020correlated,BCGIKRS22,raghuraman2023expand} in fact only admit randomized constructions that fail with probability $1/\poly(n)$; that is, they fail with non-negligible probability. This implies that the resulting constructions technically fail to satisfy standard security definitions. In contrast, all of our code constructions satisfy the targeted combinatorial properties with probability at least $1-\exp(-\Omega(n))$.

\nri{Tried to add something more concrete here. Is it helpful or just confusing?} 
\jmi{I think it's good}

Concretely, one can plug the (dual of) our first code construction into the framework of~\cite{boyle2019efficient} to obtain PCGs for standard correlations like oblivious transfers with quasi-linear computation time for the involved parties. While constructions of such efficiency were known previously, we view our additional guarantee of local-similarity to RLCs as a stronger security guarantee than prior constructions offered (which only promised good minimum distance). Additionally, as emphasized above, ours is the first construction of such efficiency with negligible failure probability (in the randomized construction of the utilized code). We leave further investigation of the PCG implications of our codes for future research. 

\medskip 
Finally, we recall that a linear code with large distance and dual distance yields a linear secret sharing scheme with small reconstruction and large privacy, and moreover that an asymptotically good linear code with asymptotically good dual yields an asymptotically good linear secret sharing scheme. The asymptotic linear secret sharing scheme was first considered and realized in~\cite{CC06}, thereby enabling an ``asymptotic version'' of the general MPC theorem from~\cite{BGW88}. Since 2007, with the advent of the so-called ``MPC-in-the-head paradigm''~\cite{IKOS09}, these asymptotically-good schemes have been further exposed as a central theoretical primitive in numerous constant communication-rate results in multi-party cryptographic scenarios and -- perhaps surprisingly -- in two-party cryptography as well. Druk and Ishai~\cite{DI14} utilize an expander graph to construct a linear time encodable code; such a code combined with a linear-time universal hash function \cite{CDDFS15} yields an asymptotically good linear secret sharing scheme equipped with a linear time encoding algorithm. Recently, Cramer, Xing and Yuan~\cite{CXY20} construct an asymptotically good secret sharing scheme with quasi-linear time encoding and decoding algorithm. 

We remark that the privacy and reconstruction of all above mentioned asymptotically good schemes do not achieve the optimal trade-off, i.e., GV bound. In contrast, the linear code derived from our linearized polynomials yields an asymptotically good linear secret sharing scheme with quasi-linear-time encoding algorithm, and moreover the privacy and reconstruction of the resulting scheme achieves the optimal trade-off. 
\nri{@Chen, do you think you could state something a bit more formal about what our codes allow us to achieve?}

\medskip

\paragraph{Challenge of sublinear randomness.} As a final contribution, we highlight the inherent challenge of designing code ensembles consuming $o(n)$ random bits outputting codes that achieve the EB bound with high probability -- or for that matter, even the GV-bound.\footnote{Of course, recent breakthroughs~\cite{ta2017explicit} provide \emph{explicit} binary codes of rate nearly $\Omega(\eps^2)$ with minimum distance $\frac12-\eps$; however such constructions seem inherently unable to reach the GV bound in other regimes, and even in the $\frac12-\eps$ distance regime the constant in front of the rate is unlikely to be pushed to $\frac{2}{\ln 2}$, as one would hope.} More precisely, we observe that any code ensemble that is $\ell$-\emph{locally similar}
to RLCs requires at least $\ell(1-R)n\log_2q$ bits of randomness. This is not much more than an observation -- namely, that the granularity required by certain distributions is only achievable with this many bits of randomness -- but we nonetheless believe that elucidating this shortcoming is insightful. Note that local similarity to RLC is merely a sufficient condition for a code ensemble to share combinatorial properties with RLCs. However, we emphasize that all the previous works (including our own) rely on local-similarity. 

We also observe that our lower bound is tight: a simple twist on our codes from linearized polynomials $\PCLP$ requires only $\ell (1-R)n\log_2 q$ to sample and is $\ell$-locally similar to RLCs. In fact, this code is a natural generalization of the famous Wozencraft ensemble~\cite{M63}. \jm{Rev 3 wanted us to say more about the resemblence to Wozencraft}\nr{Added details to flesh this out earlier.} That is, recall that the Wozencraft ensmeble is obtained by uniformly sampling $\beta \in \F_{q^k}$ and then defining
\[
    \cC := \{(\varphi(\alpha),\varphi(\beta\alpha)) : \alpha \in \F_{q^k}\} \ ,
\]
where $\varphi:\F_{q^k} \to \F_q^k$ is a full-rank $\F_q$-linear map. Defining $f(X) = \beta X$, the codewords of $\cC$ are thus of the form $(\varphi(\alpha),\varphi(f(\alpha)))$. Note that $f(X)$ is in fact a uniformly random linearized polynomial of $q$-degree at most $0$. The generalization that we consider is thus to allow $f(X)$ to have $q$-degree at most $\ell-1$, and we observe that indeed this code ensemble will be $\ell$-locally similar to RLCs. However, a drawback is that this construction only naturally produces codes of rate $1/2$; by sampling $r$ independent linearized polynomials we can also achieve rates of the form $1/r$ for $r \in \N$, but not any possible rate as we can with $\PCLP$'s (which can themselves be similarly considered a different generalization of the Wozencraft ensemble). Further discussion of this construction is provided in \Cref{subsec:generalized-wozencraft-ensemble}.

\medskip

To conclude this discussion, we provide \Cref{fig:table-of-results} summarizing our contributions and the prior state-of-the-art. \nr{I thought this belonged at the end. Also didn't want it to float so much, hence the [H].}

\begin{figure}[H] 
    \centering
    \begin{tabular}{|c|c|c|c|}
        \hline
        Source & Code & Randomness & Dual Code\\
        \hline
        \cite{GuruswamiLMRSW20} & Random Linear Code & $O(n^2)$ & EB\\
        \hline
        \cite{MRRSW} & Low-Density Parity-Check Codes & $O(Ln\log n)$ & \xmark \\
        \hline
        \cite{GM22} & Puncturing of Low-Bias Code & $O(Ln)$ & \xmark \\
        \hline
        \cite{PP23} & Expander-Puncturing of Low-Bias Code & $O(Ln)$ & \xmark\\
        \hline
        \Cref{sec:linearized-polys} & Codes from Linearized Polynomials & $Ln$ & GV\\
        \hline
        \Cref{sec:linearized-polys} & Generalized Wozencraft Ensemble & $L(1-R)n$, $R=\frac{1}{\text{integer}}$ & \xmark \\
        \hline
        \Cref{sec:row-col} & Row-Column Polynomial Codes & $2Ln$ & EB \\
        \hline
          \Cref{sec:sublinear-randomness} & Lower Bound for RLC-similarity& $L(1-R-\eps)n$ &  \\
        \hline
    \end{tabular}
    \caption{Randomness requirements for binary codes achieving the Elias Bound. We remark that all the above constructions generalize to larger (but constant) $q$. Regarding the dual code criterion, ``EB'' means that the dual-code also achieves the Elias Bound (for lists of size $L$), while ``GV'' means that the dual distance achieves the GV bound. An \xmark~means that no guarantees are provided (and, in certain cases, cannot hold). The lower bound applies to all ensembles that achieve similarity to RLC (a stronger property than the Elias bound; see Definitions \ref{def:LocalSimilarityIntro}, \ref{def:local-sim}), including all constructions listed in this table.}
    \label{fig:table-of-results}
\end{figure}

\subsection{Techniques} \label{sec:techniques}

Given a random code $\cC$ of (design) rate $R$ sampled according to either of the above constructions, we wish to demonstrate that it behaves combinatorially much as an RLC $\cD$ of rate $R$. More precisely, we consider any property $\cP$ obtained by forbidding $\ell$-sized sets of vectors and wish to show that $\cC$ satisfies the property $\cP$ with probability roughly the same as $\cD$. As discussed above, these properties capture well-studied notions like list-decodability and list-recoverability. 

Fortunately, recent works~\cite{MRRSW,GM22} have introduced a calculus for making such arguments. Intuitively, a conclusion of these works is that it suffices to argue that, for any $S \subseteq \F_q^n$ of size $\ell$, the probability that $S$ is contained in $\cC$ is roughly the same as the probability this holds for $\cD$. Of course, this latter probability is $q^{-(1-R)n\rank(S)}$, where $\rank(S)$ denotes the dimension of the vector space spanned by $S$.\footnote{At least, this holds exactly if one samples a RLC by choosing a uniformly random parity-check matrix, which is the model we consider in this work. For other natural models -- e.g., sampling a uniformly random generator matrix -- $q^{-(1-R)n\rank(S)}$ gives an upper on this probability.}

In a bit more detail, these works in fact view such sets as matrices in $\F_q^{n \times \ell}$, and observe that the forbidden matrices for properties like list-decoding are closed under row-permutation. One can thus restrict to the various orbit classes of this action, and study these orbit classes one at a time. The requirement is in fact that, for each orbit class, the expected number of matrices from that class lying in $\cC$ is roughly the same as the expected number lying in $\cD$. By identifying these orbit classes with row distributions, one obtains \Cref{def:LocalSimilarityIntro}.\nr{Added a pointer back to the previous definition.}


For our specific constructions, for fixed vectors $\bx \in \F_q^n$ we consider event indicator random variables $X_{\bx}$ outputting $1$ if $\bx \in \cC$, and observe that, for any $1 \leq b \leq \ell$, a $b$-tuple of random variables $(X_{\bx_1},\dots,X_{\bx_b})$ is independent if $\bx_1,\dots,\bx_b$ are linearly independent. Of course, this also holds for random linear codes (in fact, it holds for tuples of all sizes),\footnote{At least, this is true if one defines a random linear code by sampling a uniformly random parity-check matrix, which we here implicitly assume,} and this is the sense in which our constructions approximate the ``local behaviour'' of random linear codes, which we can then bootstrap into full-blown $\ell$-local similarity via the machinery of~\cite{MRRSW,GM22}. 

\medskip

To analyze our first construction based on linearized polynomials, we exploit the fact that for any fixed tuple of inputs and outputs $(x_1,y_1),\dots,(x_b,y_b)$ with $x_1,\dots,x_b \in \F_{q^n}$ linearly independent over $\F_q$, over a uniformly random choice of linearized polynomial $f(X)$ of $q$-degree at most $\ell-1$, the vector $(f(x_1),\dots,f(x_b))$ is distributed uniformly at random over $\F_{q^n}^b$. This follows readily from properties of the Moore matrix $M=(\alpha_i^{q^j})_{ij}$ (recalling $b \leq \ell$).

Next, we consider the code's dual. In order to show that the dual of a $\PCLP(R,\ell)$ code achieves the GV-bound with high probably, we exploit a pleasant characterization of its dual. Namely, the dual is of the form $\{\psi(f_0^{-1} \cdot \beta):\beta \in W\}$, where $W$ is connected to the dual of $V$ and $\psi$ is (morally) dual to $\varphi$.\footnote{More precisely, if $\{\alpha_1,\dots,\alpha_n\}$ is a basis for $\F_{q^n}$ for which $\varphi\left(\sum_{i}x_i\alpha_i\right) = (x_1,\dots,x_n)$, then $\psi\left(\sum_i y_i \beta_i\right) = (y_1,\dots,y_n)$ where $\{\beta_1,\dots,\beta_n\}$ is the dual basis.} In particular, the dual is essentially another $\PCLP(1-R,1)$-code! Hence, the previous discussion implies it is $1$-locally similar to an RLC, which means in particular that it achieves the GV bound.

\medskip

For the second construction, i.e., pseudorandom codes from row and column polynomials, upon summing over all choices of (full-rank) sets of message vectors one can observe that the desired behaviour of the random variables $X_{\bx_i}$ in fact follows from the following: if $X \in \F_q^{b \times k}$ is of rank $b$, then over the randomness of the generator $G$, $XG \in \F_q^{b \times n}$ is uniformly random. Recalling $G = G'+G''$, we just must show $XG'$ is uniformly random. Exploiting the requirement that $\gamma:\F_{q^n} \to \F_q^n$ is an isomorphism, it suffices to show that the tuple
\begin{align} \label{eq:informal-ell-independence}
    \left(\sum_{i=1}^k X_{ji} f(\alpha_i)\right)_{j \in [b]} \in \F_{q^n}^b 
\end{align}
is uniformly random over the choice of $f$. And this follows naturally from properties of the Vandermonde matrix, as the $\alpha_i$'s are distinct and $f$ is chosen uniformly amongst all polynomials of degree $\leq \ell-1$ (recalling again $b \leq \ell$).

The argument establishing $\ell$-local-similarity for the dual is almost identical to the above argument for the primal. Here, it suffices to consider a matrix $X \in \F_q^{n \times b}$ of rank $b$ and argue that over the randomness of $G''$ now, $G''X$ is uniformly random. And to do this, one reduces to studying a tuple of random variables analogous to those in \Cref{eq:informal-ell-independence}, although in this case the polynomial $g(X)$ will play the starring role. Since $g(X)$ is again uniformly sampled from all polynomials of degree at most $\ell-1$, the desired conclusion follows.


\section{Preliminaries} \label{sec:prelims}
\subsection{Miscellaneous Notation}  \label{sec:notation}
By default, $\N=\{1,2,\dots,\}$, i.e., $0\notin \N$. For a positive integer $n \in \mathbb N$, $[n]:=\{1,\dots,n\}$. Throughout, $q$ denotes a prime power, $\F_q$ denotes a finite field with $q$ elements, and $\F_{q^n}$ denotes a degree $n$ extension of $\F_q$ (which of course has size $q^n$). The \emph{$q$-ary entropy function} is defined for $x \in (0,1)$ as
\begin{align*}
    h_q(x) := x\log_q(q-1) + x\log_q\frac1x + (1-x)\log_q\frac1{1-x} \ 
\end{align*}
and extended by continuity to $h_q(0)=0$ and $h_q(1)=\log_q(q-1)$. This function is known to be monotonically increasing from $0$ to $1$ on the interval $[0,1-1/q]$, and hence we can define its inverse $h_q^{-1}:[0,1] \to [0,1-1/q]$. 

Given a discrete distribution $\tau$ and a universe $U$, we write $\tau \sim U$ to denote that $\tau$ is distributed over $U$, i.e., that $\tau$'s support $\supp(\tau) := \{x : \tau(x)>0\} \subseteq U$. In general, we write vectors in \textbf{boldface} -- e.g., $\bx$, $\by$, etc. -- while scalars are unbolded.

\subsection{Algebraic Concepts} \label{sec:alg-concepts}

Let $\Tr: \F_{q^n}\rightarrow \F_q$ be a trace function, i.e., 
$$
\Tr(\alpha)=\sum_{i=0}^{n-1}\alpha^{q^i}.
$$
\begin{lemma}\label{lm:dualbasis}
Suppose $\alpha_1,\ldots,\alpha_n$ is a basis of $\F_{q^n}$ over $\F_q$. We can always find a dual basis $\beta_1,\ldots,\beta_n$ in $\F_{q^n}$ such that 
\begin{equation}\label{eqn:dualbasis}
\Tr(\alpha_i\beta_j)=\delta_{ij}
\end{equation}
where $\delta_{ij}=0$ for any $i\neq j$ and is otherwise $1$. 
\end{lemma}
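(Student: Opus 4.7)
The plan is to prove this via the standard fact that the trace form $B(x,y) := \Tr(xy)$ is a non-degenerate symmetric $\F_q$-bilinear form on $\F_{q^n}$, and then invoke the elementary linear algebra fact that any basis with respect to a non-degenerate bilinear form admits a unique dual basis.

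First I would establish that $\Tr$ is $\F_q$-linear (clear from the definition, since Frobenius is a ring homomorphism fixing $\F_q$), and that $B$ is an $\F_q$-bilinear form on $\F_{q^n}$ viewed as an $n$-dimensional $\F_q$-vector space. The crux is then non-degeneracy: I must show that if $\alpha \in \F_{q^n}$ satisfies $\Tr(\alpha y) = 0$ for all $y \in \F_{q^n}$, then $\alpha = 0$. To do this, I would first observe that $\Tr$ is not identically the zero map on $\F_{q^n}$: indeed, $\Tr(x) = x + x^q + \cdots + x^{q^{n-1}}$ is a polynomial in $x$ of degree $q^{n-1} < q^n$, so it cannot vanish at all $q^n$ elements of $\F_{q^n}$. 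Now suppose $\alpha \neq 0$; then $y \mapsto \alpha y$ is an $\F_q$-linear bijection of $\F_{q^n}$, so the image $\{\alpha y : y \in \F_{q^n}\} = \F_{q^n}$, and hence $\Tr(\alpha y) = 0$ for all $y$ would force $\Tr$ to vanish identically, a contradiction. This establishes non-degeneracy.

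Next I would package non-degeneracy as the statement that the $\F_q$-linear map $\Phi : \F_{q^n} \to \Hom_{\F_q}(\F_{q^n}, \F_q)$ defined by $\Phi(\beta)(x) := \Tr(x\beta)$ has trivial kernel, and hence (by equality of $\F_q$-dimensions, both being $n$) is an isomorphism. For each $j \in [n]$, let $\lambda_j \in \Hom_{\F_q}(\F_{q^n}, \F_q)$ be the linear functional characterized by $\lambda_j(\alpha_i) = \delta_{ij}$ for $i \in [n]$ (this exists and is unique because $\alpha_1, \ldots, \alpha_n$ is a basis). Then I would define $\beta_j := \Phi^{-1}(\lambda_j)$; by construction $\Tr(\alpha_i \beta_j) = \Phi(\beta_j)(\alpha_i) = \lambda_j(\alpha_i) = \delta_{ij}$, verifying \eqref{eqn:dualbasis}.

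Finally, to justify calling $\{\beta_j\}$ a \emph{basis}, I would note that a linear dependence $\sum_j c_j \beta_j = 0$ with $c_j \in \F_q$ yields, upon pairing with $\alpha_i$ via $\Tr$, the equation $c_i = \sum_j c_j \Tr(\alpha_i \beta_j) = 0$ for every $i$; thus the $\beta_j$ are $\F_q$-linearly independent, and since there are $n = \dim_{\F_q} \F_{q^n}$ of them, they form a basis. The only mildly delicate step is the non-vanishing of $\Tr$; everything else is routine linear algebra.
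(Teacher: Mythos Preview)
Your proof is correct and follows the same underlying idea as the paper's---namely, the non-degeneracy of the trace form $B(x,y)=\Tr(xy)$---but packages it differently. The paper works in coordinates: it expands $\beta_i=\sum_r b_{i,r}\alpha_r$, forms the Gram matrix $T=(\Tr(\alpha_j\alpha_r))_{j,r}$, and solves $T\,(b_{i,1},\dots,b_{i,n})^\top=\be_i$. Your version is the coordinate-free equivalent, phrased via the isomorphism $\Phi:\F_{q^n}\to\Hom_{\F_q}(\F_{q^n},\F_q)$ and pulling back the dual functionals $\lambda_j$. The linear-independence arguments for the $\beta_j$ are essentially identical in both. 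One point in your favor: the paper simply asserts that $T$ has full rank ``since $\alpha_1,\dots,\alpha_n$ forms a basis,'' which is precisely the non-degeneracy of the trace form and is not automatic from the $\alpha_i$ being a basis; you actually supply this step via the degree bound $\deg_x\Tr(x)=q^{n-1}<q^n$, making your argument more self-contained.
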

\begin{proof}
	We provide the proof for completeness. Write $\beta_i=\sum_{r=1}^n b_{i,r}\alpha_r$ and we consider the equations 
	$$
	\delta_{ji} = \Tr(\alpha_j\beta_i) = \Tr\left(\alpha_j\sum_{r=1}^n b_{i,r}\alpha_r \right) \ .
	$$
	Define the $n\times n$ matrix $T=(\Tr(\alpha_j\alpha_r))_{j, r \in [n]}$ over $\F_q$; the above $n$ equations can be written as $T(b_{i,1},\ldots,b_{i,n})^\top=\be_i$ where $\be_i$ is the $i$-th vector in the standard basis of $\F_q^n$. Since $\alpha_1,\ldots,\alpha_n$ forms a basis of $\F_{q^n}$, $T$ has full rank and there must exists a nonzero solution for $b_{i,1},\ldots,b_{i,n}$. Thus, we can always find $\beta_1,\ldots,\beta_n$ which satisfy \eqref{eqn:dualbasis}. It remains to show that $\beta_1,\ldots,\beta_n$ are $\F_q$-linearly independent. Assume not and without loss of generality we may assume $\beta_n$ can be represented as the linear combination of $\beta_1,\ldots,\beta_{n-1}$, i.e., $\beta_n=\sum_{i=1}^{n-1}\lambda_i \beta_i$ with $\lambda_1,\ldots,\lambda_{n-1}\in \F_q$. From \eqref{eqn:dualbasis} we have 
	$$
	1=\Tr(\alpha_n \beta_n)=\sum_{i=1}^{n-1}\lambda_i \Tr(\alpha_n\beta_i)=0 \ ,
	$$
	a clear contradiction.
\end{proof}

We now introduce the concept of \emph{orthogonality} for two $\F_q$-subspaces of $\F_{q^n}$. 
\begin{definition} \label{def:dual-space}
	Let $V,W\subseteq \F_{q^n}$ be a $\F_q$-linear space. $W$ is said to be \emph{orthogonal} to $V$ if the following holds
	$$\Tr(ab)=0, \quad \forall a\in V, b\in W$$
	We write $W\perp V$ to denote that $W$ is orthogonal to $V$. 
	If $\dim(W)+\dim(V)=n$, $W$ is said to be the \emph{dual space} of $V$. 
\end{definition}

Finally, we collect terminology connected to linearized polynomials. 

\begin{definition} [Linearized Polynomial]
    We call a polynomial $f(X) \in \F_{q^n}[X]$ a \emph{linearized polynomial} if it is of the form $\sum_{i=0}^{d}f_iX^{q^i}$ with $f_i \in \F_{q^n}$ and $d \in \N$. That is, the only monomials appearing in $f(X)$ have exponent a power of $q$. Its \emph{$q$-degree} is $\max\{i : f_i \neq 0\}$. 
\end{definition}

Recall that the Frobenius automorphism $\mathrm{Frob}:\F_{q^n} \to \F_{q^n}$ defined by $\mathrm{Frob}(\alpha) = \alpha^q$ is \emph{$\F_q$-linear}, i.e., it holds that for all $a,b \in \F_q$ and $\alpha,\beta \in \F_{q^n}$, 
\[
    (a\alpha + b\beta)^q = a\alpha^q + b\beta^q \ . 
\]
This readily implies that any linearized polynomial is also an $\F_q$-linear map, justifying the name. We record this fact now.

\begin{proposition} \label{prop:linearized-polys-linear}
    Any linearized polynomial defines an $\F_q$-linear map from $\F_{q^n} \to \F_{q^n}$.
\end{proposition}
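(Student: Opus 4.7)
The plan is to verify the two defining conditions of an $\F_q$-linear map $\alpha \mapsto f(\alpha)$, namely additivity $f(\alpha+\beta) = f(\alpha)+f(\beta)$ and $\F_q$-homogeneity $f(c\alpha) = c f(\alpha)$ for $c \in \F_q$, by reducing to the corresponding statements for each monomial $X^{q^i}$ and then invoking the fact mentioned just before the proposition: the Frobenius $\mathrm{Frob}(\alpha) = \alpha^q$ is $\F_q$-linear. Since $f(X) = \sum_{i=0}^d f_i X^{q^i}$ is a finite sum, and the collection of $\F_q$-linear maps $\F_{q^n} \to \F_{q^n}$ is closed under addition, it suffices to show that each map $\alpha \mapsto f_i \alpha^{q^i}$ is itself $\F_q$-linear.

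First I would verify that $\alpha \mapsto \alpha^{q^i}$ is $\F_q$-linear for every $i \geq 0$. This follows by induction on $i$ from the $\F_q$-linearity of $\mathrm{Frob}$ stated in the excerpt: the $i$-th iterate $\mathrm{Frob}^i$ is $\F_q$-linear as a composition of $\F_q$-linear maps, and $\mathrm{Frob}^i(\alpha) = \alpha^{q^i}$. Next, I would observe that for any fixed $f_i \in \F_{q^n}$, the multiplication map $\mu_{f_i}: \beta \mapsto f_i \beta$ is $\F_q$-linear; this is immediate because $\F_{q^n}$ is in particular a vector space over $\F_q$ and multiplication by a scalar from the larger field $\F_{q^n}$ respects addition and commutes with scalars from the subfield $\F_q$. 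Composing, $\alpha \mapsto f_i \alpha^{q^i} = \mu_{f_i}(\mathrm{Frob}^i(\alpha))$ is $\F_q$-linear.

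Finally, summing these $\F_q$-linear maps over $i = 0, 1, \dots, d$ yields $f$, and the pointwise sum of $\F_q$-linear maps is $\F_q$-linear. This establishes the proposition.

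There is no real obstacle here; the only subtlety worth flagging explicitly in the write-up is that multiplication by an element of $\F_{q^n}$ is $\F_q$-linear (but generally \emph{not} $\F_{q^n}$-linear in a nontrivial way beyond being itself, which is not claimed) and that the exponent being a power of $q$ is precisely what makes $X^{q^i}$ additive via the Frobenius. The argument is short enough that I would likely present it in one compact paragraph in the actual proof.
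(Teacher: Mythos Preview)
Your proposal is correct and follows exactly the approach the paper itself indicates: the paper does not give a formal proof but simply notes that the $\F_q$-linearity of the Frobenius ``readily implies'' the proposition, and your write-up is precisely the natural unpacking of that remark (iterates of Frobenius are $\F_q$-linear, multiplication by a fixed element of $\F_{q^n}$ is $\F_q$-linear, and sums/compositions of $\F_q$-linear maps are $\F_q$-linear).
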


\subsection{Coding Theory} \label{sec:coding-theory}

A \emph{linear code} $\cC$ is a subspace of $\F_q^n$ for a prime power $q$. Such a code may always be presented in terms of its \emph{generator matrix}, which is a matrix $G \in \F_q^{k \times n}$ for which $\cC = \{\bm{m}G:\bm{m} \in \F_q^k\}$. When $q=2$, a code is called \emph{binary}. The \emph{block-length} of the code is $n$ and its \emph{rate} is $R:=\frac kn$, where $k = \dim(\cC)$. We endow $\F_q^n$ with the (relative) \emph{Hamming metric} $d(\bx, \by):=\tfrac1n|\{i \in [n]:x_i \neq y_i\}|$ for $\bx,\by \in \F_q^n$. For a linear code $\cC \leq \F_q^n$, its {dual code} is defined as $\cC^\perp:=\{\by \in \F_q^n:\forall \bx \in \cC,~\langle\bx,\by\rangle=0\}$. \footnote{Note the contrast with \Cref{def:dual-space}: that definition is concerned with $\F_q$-linear subspaces of the ambient space $\F_{q^n}$, which is endowed with a different inner-product than $\F_q^n$. The appropriate meaning of $\perp$ can therefore be deduced from the context.} 


A \emph{random linear code} $\cC$ of rate $R=k/n$ -- briefly, a RLC$(R)$ -- is defined to be the kernel of a uniformly random matrix $H \in \F_q^{(n-k) \times n}$, i.e., $\cC := \{\bx \in \F_q^n : H\bx^\top=0\}$. 

This work concerns combinatorial properties of linear codes. The quintessential example of such a property is \emph{minimum distance} defined as $\delta = \min\{d(\bx,\by:\bx \neq \by, \bx,\by \in \cC\}$. Equivalently, it is $\min\{\wt(\bx):\bx \in \cC\setminus \{0\}\}$, the minimum weight of a non-zero codeword. By the triangle-inequality for the Hamming metric, it is immediate that $\delta/2$ is the maximum radius at which one can hope to uniquely-decode from worst-case errors. If one relaxes the requirement for unique-decoding and is satisfied with outputting a list of possible messages, then one arrives \emph{list-decoding}. 

\begin{definition} [List-Decodability]
    Let $\rho \in (0,1-1/q)$ and $L \geq 1$. A code $\cC \subseteq \F_q^n$ is $(\rho,L)$\emph{-list-decodable} if for all $\bz \in \F_q^n$, 
    \[
        |\{\bc \in \cC: d(\bc,\bz) \leq \rho\}| < L \ .
    \]
\end{definition}

A generalization of list-decoding is proferred by list-recovery. For this notion, we extend the definition of Hamming distance to allow one of the arguments to be a tuple of sets $\bS = (S_1,\dots,S_n)$, where each $S_i \subseteq \F_q$, as follows: $d(\bx,\bS) := \frac1n|\{i \in [n]:x_i \notin S_i\}|$. 

\begin{definition} [List-Recovery]
    Let $\rho \in (0,1-1/q)$, $1\leq \lambda \leq q-1$ and $L \geq 1$. A code $\cC \subseteq \F_q^n$ is $(\rho,\lambda,L)$\emph{-list-recoverable} if for all tuples $\bS = (S_1,\dots,S_n)$ with each $S_i \subseteq \F_q$ satisfying $|S_i|\leq \lambda$, 
    \[
        |\{\bc \in \cC: d(\bc,\bS) \leq \rho\}| < L \ .
    \]
\end{definition}


These are both special cases of the much more general class of \emph{local properties}, which we now introduce. The technical terminology takes some time to motivate and define, but will allow for a very clean argument once we have it in place. 

\subsection{Local Properties} \label{sec:local-props}

We now introduce the specialized notations and tools that we need in order to apply the machinery of~\cite{MRRSW,GM22}. Generally speaking, this machinery allows us to efficiently reason about the probability that sets of $\ell$ vectors (for any integer $\ell = O(1)$) lie in random ensembles of codes. In fact, it is convenient to (arbitrarily) order these sets and thereby view them as matrices. Thus, for $A \in \F_q^{n \times \ell}$ we will talk about events of the form ``$A \subseteq \cC$'', which denotes that event that every column of $A$ is contained in $\cC$.

To index these events, we assign to each matrix a \emph{type}, which is determined by the \emph{empirical row distribution} of the matrix.

\begin{definition} [Empirical Row Distribution]
    Let $A \in \F_q^{n \times \ell}$. We define its \emph{empirical row distribution} $\Emp_A\sim \F_q^\ell$ as
    \[
        \Pr_{i \in [n]}[A_i = x] \ ,
    \]
    where $A_i$ denotes the $i$-th row of $A$. 
\end{definition}

\begin{definition} [Matrix Type]
    Let $\ell \in \N$ and let $\tau \sim \F_q^\ell$. For $n \in \N$, we denote
    \[
        \cM_{n,\tau} := \left\{A \in \F_q^{n \times \ell} : \Emp_A=\tau\right\} \ .
    \]
\end{definition}

For a distribution $\tau \sim \F_q^\ell$, we denote by $\dim(\tau) = \dim(\mathrm{span}(\supp(\tau)))$. If $\dim(\tau)=\ell$, we say that $\tau$ is \emph{full-rank}. Given a linear map $B:\F_q^\ell \to \F_q^b$ we denote by $B\tau$ the distribution of the random vector obtained by first sampling $\bv \sim\tau$ and subsequently outputting $B\bv$. 

In order for a set $\cM_{n,\tau}$ to be non-empty, note that it is necessary that for all $\bv \in \F_q^\ell$, $n \cdot \tau(\bv) \in \{0,1,\dots,n\}$. This implies that there are about $(n+1)^{q^\ell}$ choices for $\tau$ for which $\cM_{n,\tau} \neq \emptyset$; for our regime of parameters, this is polynomial in the growing parameter $n$ and therefore negligible.

Furthermore, if $\cM_{n,\tau} \neq \emptyset$ we can rely on standard estimates on multinomial coefficients (e.g., \cite[Lemma 2.2]{CS04}) to conclude
\begin{align} \label{eq:estimate-on-type-class}
        n^{-O(q^\ell)}\cdot q^{n H_q(\tau)} \leq |\cM_{n,\tau}|\leq q^{n \cdot H_q(\tau)}
\end{align}
where $H_q(\tau) = -\sum_{\bv} \tau(\bv)\log_q\tau(\bv)$ is the \emph{base-$q$ entropy} of $\tau$. 
    
In particular, these estimates show that in order to study any property defined by forbidding sets of types (which, as discussed in e.g.~\cite{MRRSW,resch2020list}, includes list-decoding/-recovery) for random linear codes, it suffices to just consider one type $\tau$ at a time (as the failure bounds are always of the form $q^{-\Omega(n)}$, allowing for a union bound over the at most $\poly(n)$ forbidden types). 

We now formally define a local-property, which is parametrized by an integer $\ell$ that we think of as a constant. A \emph{property} of $q$-ary codes is simply a family of $q$-ary codes $\cC \leq \F_q^n$.\footnote{Of course, one could consider properties of general, not necessarily linear, codes, but we do not need to do that here.} We will restrict attention to \emph{local properties}, 
which we define precisely below, but informally are properties that are defined by only including codes that do not contain certain ``forbidden sets,'' with the additional proviso that these forbidden sets all have constant size. Note that this includes properties like $(\rho,L)$-list-decoding: such a property is defined by forbidding all $L$-sized subsets of $\F_q^n$ for which all the vectors lie in some Hamming ball of radius $\rho$, i.e., subsets $\{\bx_1,\dots,\bx_{L}\}$ such that there exists $\bz \in \F_q^n$ for which $\max_{i \in [L]} d(\bx_i,\bz) \leq \rho$ (see \cite[Sec.\ 2.5]{GMRSW22} for a formal discussion of why \emph{list-recoverability} -- which we recall generalizes list-decodability -- is a local property in the sense of Definition \ref{def:local-prop}).

\begin{definition} [Local Property] \label{def:local-prop}
Let $\ell,n \in \N$ and $q$ a prime power. A property $\cP$ of block-length $n$ $q$-ary codes is called an $\ell$-\emph{local property} if there exists a finite set $T$ of distributions $\tau \sim \F_q^\ell$ such that 
\[
    \cC \in \cP \iff \forall \tau \in T, \forall A \in \cM_{n,\tau},~A \not\subseteq \cC \ .
\]
That is, such a property is defined by forbidding any matrices of some forbidden type. 
\end{definition}


\begin{remark} \label{rem:row-symmetry}
    Technically, what we have defined would be referred to as a local, \emph{row-symmetric} and \emph{monotone-decreasing} property in other works \cite{GMRSW22,GM22}. In this work all considered properties are row-symmetric and monotone-decreasing, and we omit these additional adjectives.
\end{remark}


As a final concept surrounding local properties, we define \emph{threshold rates} for RLCs, which intuitively describe the maximum rate of a RLC so that we can hope the RLC will satisfy the property $\cP$. 

\begin{definition} [Threshold Rate for RLCs]
    Let $\cP$ be an $\ell$-local property of block-length $n$ codes in $\F_q^n$. We define 
    \[
        \TRLC(\cP) := \sup\{R \in [0,1]:\Pr[\RLC(R) \text{ satisfies } \cP] \geq 1/2\} \ .
    \]
\end{definition}

A crucial property of $\TRLC(\cP)$ is the following ``sharp threshold''-type phenomenon.

\begin{proposition} [{\cite[Lemma 2.7]{MRRSW}}] \label{prop:sharp-threshold}
    Let $\cP$ be an $\ell$-local property of block-length $n$ codes in $\F_q^n$. Let $\eps>0$. 
    \begin{itemize}
        \item If $R \leq \TRLC(\cP)-\eps$ then $\Pr[\RLC(R) \text{ satisfies } \cP] \geq 1-q^{-\eps n}$.
        \item If $R \geq \TRLC(\cP)+\eps$ then $\Pr[\RLC(R) \text{ satisfies } \cP] \leq \binom{n+q^{2\ell}-1}{q^{2\ell}-1}^3\cdot q^{-\eps n}$.
    \end{itemize}
\end{proposition}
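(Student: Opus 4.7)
The plan is to reduce everything to first- and second-moment estimates on single forbidden types. For an $\RLC(R)$ sampled via a uniform parity-check matrix $H \in \F_q^{(1-R)n \times n}$, one has $\Pr[A \subseteq \cC] = q^{-(1-R)n \rank(A)}$ for every fixed $A \in \F_q^{n \times \ell}$, since this amounts to requiring that each of the $\rank(A)$ linearly-independent columns of $A$ lies in $\ker(H)$. Setting $X_\tau := |\{A \in \cM_{n,\tau} : A \subseteq \cC\}|$, I therefore obtain $\E[X_\tau] = |\cM_{n,\tau}| \cdot q^{-(1-R)n\dim(\tau)}$ exactly, which by the estimate~\eqref{eq:estimate-on-type-class} equals $q^{n\Phi_R(\tau)\pm o(n)}$ with $\Phi_R(\tau) := H_q(\tau) - (1-R)\dim(\tau)$. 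Writing $R_\tau := 1 - H_q(\tau)/\dim(\tau)$ for the critical rate at which $\Phi_R(\tau) = 0$, a short argument, based on the two directions below applied to the single-type property defined by $\{\tau\}$, will show $\TRLC(\cP) = \min_{\tau \in T} R_\tau$ up to an additive $o(1)$ slack.

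For the first direction, assume $R \le \TRLC(\cP) - \eps$. Then $R \le R_\tau - \eps$ for every $\tau \in T$, so $\Phi_R(\tau) \le -\eps$ and $\E[X_\tau] \le q^{-\eps n + o(n)}$. Union-bounding over the at most $\binom{n+q^\ell-1}{q^\ell-1}$ possible type distributions and applying Markov gives $\Pr[\cC \text{ violates } \cP] \le \sum_\tau \E[X_\tau] \le q^{-\eps n}$, after absorbing the polynomial prefactor into a slight shrinkage of $\eps$.

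For the reverse direction, fix $\tau^* \in T$ achieving the minimum, so that $\E[X_{\tau^*}] \ge q^{\eps n - o(n)}$ whenever $R \ge \TRLC(\cP) + \eps$. I would then estimate $\E[X_{\tau^*}^2]$ by grouping ordered pairs $(A, A') \in \cM_{n,\tau^*}^2$ according to the empirical row distribution $\sigma \sim \F_q^{2\ell}$ of the concatenation $[A \mid A']$; the number of such $\sigma$ is bounded by $\binom{n+q^{2\ell}-1}{q^{2\ell}-1}$, which matches the factor appearing in the proposition. The contribution of each $\sigma$ is $|\{(A,A') : \Emp_{[A\mid A']}=\sigma\}| \cdot q^{-(1-R)n\dim(\sigma)} \le |\cM_{n,\sigma}| \cdot q^{-(1-R)n\dim(\sigma)}$, and the ``fully independent'' pairs with $\dim(\sigma) = 2\dim(\tau^*)$ sum up to exactly $\E[X_{\tau^*}]^2$. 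A case analysis stratified by $\dim(\sigma)$ shows that in the regime $R \ge R_{\tau^*} + \eps$, the remaining ``dependent'' pairs contribute at most $\poly(n) \cdot q^{-\eps n} \cdot \E[X_{\tau^*}]^2$, so $\Var(X_{\tau^*}) \le \poly(n) \cdot q^{-\eps n} \cdot \E[X_{\tau^*}]^2$. Chebyshev's inequality then yields $\Pr[\cC \text{ satisfies } \cP] \le \Pr[X_{\tau^*} = 0] \le \poly(n) \cdot q^{-\eps n}$, with the cubed polynomial factor in the statement collecting (roughly) the union-bound over forbidden types, the enumeration of combined types, and the rank-stratified counting inside each combined type.

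The main obstacle is in the reverse direction: a naive Paley-Zygmund application to the second-moment estimate would only produce $\Pr[X_{\tau^*} \ge 1] \ge 1/\poly(n)$, which is far short of the exponential concentration claimed. Upgrading to the sharper Chebyshev-style bound demands showing that every ``dependent'' pair type $\sigma$ with $\dim(\sigma) < 2\dim(\tau^*)$ contributes exponentially less than $\E[X_{\tau^*}]^2$ to the second moment whenever $R > R_{\tau^*}$; balancing the $|\cM_{n,\sigma}|$-counts against the rank-dependent incidence probabilities $q^{-(1-R)n\dim(\sigma)}$ across all combined types is the central technical step and is what dictates the precise polynomial factor appearing in the statement.
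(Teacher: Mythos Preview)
The paper does not supply its own proof of this proposition; it is quoted from \cite[Lemma~2.7]{MRRSW}, and the remark immediately following it says explicitly that the proof there goes via a nontrivial characterization of $\TRLC(\cP)$ as an optimization problem which the present paper declines to state. So there is no in-paper argument to compare against. That said, your outline contains a genuine gap worth naming.

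The gap is the assertion $\TRLC(\cP) = \min_{\tau \in T} R_\tau$, which is false in general. Take $q=2$, $\ell=2$, $T=\{\tau\}$ where $\tau$ is the product of $\Ber(\delta)$ on the first coordinate with $\Ber(1/2)$ on the second, for some $\delta<1/2$. Then $H_2(\tau)=h_2(\delta)+1$, $\dim(\tau)=2$, so $R_\tau = (1-h_2(\delta))/2$; but projecting onto the first coordinate gives $B\tau=\Ber(\delta)$ with $R_{B\tau} = 1-h_2(\delta) > R_\tau$. For any $R \in (R_\tau, R_{B\tau})$ the expectation $\E[X_\tau]$ is exponentially large, yet $X_\tau=0$ with high probability, since any matrix of type $\tau$ in $\cC$ would force a codeword of weight exactly $\delta$, and an $\RLC(R)$ with $R<1-h_2(\delta)$ has none. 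Hence $\TRLC(\{\tau\}) \ge R_{B\tau} > R_\tau$, so your union bound $\sum_{\tau} \E[X_\tau]$ does not reach all rates below $\TRLC(\cP)-\eps$. The same example breaks your variance claim: the combined type $\sigma$ that identifies the two first columns and leaves the two second columns independent has $\dim(\sigma)=3$, $H_2(\sigma)=h_2(\delta)+2$, and contributes $q^{n((1-R)-h_2(\delta))}\cdot \E[X_\tau]^2$ to $\E[X_\tau^2]$, which is exponentially \emph{larger} than $\E[X_\tau]^2$ whenever $R<R_{B\tau}$.

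The characterization actually used in \cite{MRRSW} is a $\min$--$\max$: $\TRLC(\cP) = \min_{\tau\in T}\max_{B} R_{B\tau}$, the inner maximum over full-rank linear maps $B:\F_q^\ell\to\F_q^b$ for $1\le b\le\ell$. With this in hand, the first direction is obtained by bounding $\Pr[X_\tau>0]\le \E[X_{B^*\tau}]$ for the optimal projection $B^*$ of each $\tau$ (linear combinations of columns of a contained matrix stay in $\cC$), while the second direction applies the second moment to a minimizer $\tau^*$ of the outer $\min$; the hypothesis $R>\max_B R_{B\tau^*}+\eps$ is exactly what forces every dependent combined type to contribute at most $q^{-\eps n}\cdot\E[X_{\tau^*}]^2$. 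Your sketch goes through once this correction is inserted.
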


\begin{remark}
    We remark that the above proposition follows from a characterization of $\TRLC(\cP)$, describing it as a sort of optimization problem. And in fact, the proof of \Cref{prop:GM22} below crucially uses this characterization. However, we do not need this characterization for our results, and hence refrain from providing it.
\end{remark}

Now, recall that we are not in fact directly interested in understanding combinatorial properties of (uniformly) random linear codes (RLCs). Rather, we define other ensembles of linear codes, and would like to argue that they inherit properties of RLCs. More precisely, for any $\ell$-local property $\cP$, we would like to argue that so long as we sample codes of rate $\TRLC(\cP)-\eps$, then our codes will also satisfy $\cP$ with high probability. To do this, we use the concept of \emph{local similarity}, which is inspired by concepts arising in~\cite{GM22}.


\begin{definition} [Local Similarity] \label{def:local-sim}
    Let $\cC \leq \F_q^n$ be a linear code sampled from some ensemble. We call $\cC$ $\ell$\emph{-locally-similar} to $\RLC(R)$ if, for every $1 \leq b \leq \ell$ and every distribution $\tau \sim \F_q^b$ with $\dim(\tau)=b$, we have
    \[
        \Eop_{\cC}\left[|\{A \in \cM_{n,\tau}: A \subseteq \cC\}|\right] \leq q^{(H_q(\tau)-b(1-R))n} \ .
    \]
\end{definition}

The utility of local similarity stems from the following result which states that if a property holds with high probability for RLCs of rate $R$, then it also holds for any ensemble that is locally-similar to RLCs of rate $R$. 

\begin{proposition} [{\cite[Lemma 6.12]{GM22}}] \label{prop:GM22}
    Let $n \in \N$, $q$ a prime power and $\ell \in \N$ such that $\frac{n}{\log_qn} \geq \omega_{n \to \infty}(q^{2\ell})$. Let $\cC$ be a code sampled from an ensemble that is $\ell$-locally-similar to $\RLC(R)$.

    Let also $\eps>0$. Then, for any row-symmetric and $\ell$-local property $\cP$ over $\F_q^n$ such that $R \leq \TRLC(\cP)-\eps$, it holds that 
    \[
        \Pr_{\cC}[\cC \text{ does not satisfy }\cP] \leq q^{-n(\eps - o_{n \to \infty}(1))} \ .
    \]
\end{proposition}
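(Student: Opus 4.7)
The plan is to combine a union bound over the forbidden types defining $\cP$ with Markov's inequality, applied to the expectation bound provided by local similarity. By \Cref{def:local-prop} there is a finite set $T$ of distributions $\tau \sim \F_q^\ell$ such that $\cC$ fails $\cP$ iff $\cC$ contains a matrix of type $\tau$ for some $\tau \in T$. The number of $\tau$'s with $\cM_{n,\tau} \neq \emptyset$ is at most $(n+1)^{q^\ell}$, which is $q^{o(n)}$ under the hypothesis $n / \log_q n \geq \omega(q^{2\ell})$. So Markov plus a union bound reduces the task to showing
\[
    \Eop_\cC\!\left[|\{A \in \cM_{n,\tau}: A \subseteq \cC\}|\right] \;\leq\; q^{-\eps n}
\]
for each such $\tau$.

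The next step is to handle the fact that \Cref{def:local-sim} only directly controls \emph{full-rank} distributions of dimension at most $\ell$, while a forbidden $\tau \sim \F_q^\ell$ may be rank-deficient. Writing $b := \dim(\tau) \leq \ell$, pick any $\F_q$-linear isomorphism $\pi: \spn(\supp(\tau)) \to \F_q^b$ and set $\tau' := \pi_*\tau$; then $\tau'$ is full-rank with $H_q(\tau') = H_q(\tau)$. Applying $\pi$ row-wise yields a bijection $\cM_{n,\tau} \leftrightarrow \cM_{n,\tau'}$ that preserves the column-span of each matrix, hence preserves containment in $\cC$. Local similarity applied at $\tau'$ therefore gives
\[
    \Eop_\cC\!\left[|\{A \in \cM_{n,\tau}: A \subseteq \cC\}|\right] \;\leq\; q^{n(H_q(\tau) - b(1-R))}.
\]
To conclude, I would invoke the optimization-theoretic characterization of $\TRLC(\cP)$ alluded to after \Cref{prop:sharp-threshold}, which implies $\TRLC(\cP) \leq 1 - H_q(\tau)/\dim(\tau)$ for every $\tau \in T$ (and its full-rank projections). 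Under the hypothesis $R \leq \TRLC(\cP) - \eps$ this rearranges to $H_q(\tau) - b(1-R) \leq -\eps\, b \leq -\eps$, so each expectation above is at most $q^{-\eps n}$; summing over the $q^{o(n)}$-many relevant $\tau$'s yields the claimed $q^{-n(\eps - o(1))}$ bound.

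The main obstacle is the last step — cleanly importing the threshold characterization of $\TRLC(\cP)$ from \cite{MRRSW,GM22}. While the inequality $\TRLC(\cP) \leq 1 - H_q(\tau)/\dim(\tau)$ is morally just a first-moment calculation on RLCs themselves (together with the same rank-reduction trick), specifying precisely which projected types one must include, and checking that the optimization truly produces an upper bound on $\TRLC(\cP)$ in this form, is the substantive content one needs to lift from prior work. The rank-reduction to full-rank types and the union bound over the $(n+1)^{q^\ell}$ many relevant $\tau$'s are then routine bookkeeping, with the hypothesis $n/\log_q n \geq \omega(q^{2\ell})$ serving exactly to absorb these polynomial factors into the $o(1)$ term in the exponent.
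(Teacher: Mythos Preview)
The paper does not itself prove this proposition --- it is quoted from \cite[Lemma~6.12]{GM22}, and the remark after \Cref{prop:sharp-threshold} says only that the proof ``crucially uses'' the optimization characterization of $\TRLC$ without stating it --- so there is no in-paper argument to compare against. Your overall architecture (union bound over forbidden types, Markov on the local-similarity expectation, then an appeal to the $\TRLC$ characterization) is the right shape, but the specific inequality you invoke at the end is false: it is \emph{not} true that $\TRLC(\cP) \leq 1 - H_q(\tau)/\dim(\tau)$ for every $\tau \in T$. Take $q=2$, $\ell=2$, and a type $\tau$ whose first-column marginal has small weight $\delta$ while the second column is nearly uniform, so $H_2(\tau) \approx h_2(\delta)+1$ and $\dim(\tau)=2$. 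An RLC of any rate $R < 1-h_2(\delta)$ contains no weight-$\delta$ codeword and hence no $\tau$-matrix, so with $T=\{\tau\}$ one has $\TRLC(\cP) \geq 1-h_2(\delta) > (1-h_2(\delta))/2 \approx 1 - H_2(\tau)/2$. At such $R$ your bound $q^{n(H_q(\tau)-b(1-R))}$ is exponentially large and Markov applied to $\tau$ itself gives nothing. (Relatedly, the inequality you want is the \emph{upper} bound on $\TRLC$, which is a second-moment statement, not a first-moment one.)

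What the characterization from \cite{MRRSW,GM22} actually provides is that for each $\tau \in T$ there is a \emph{projection} $\tau^*$ (onto a subset of columns, then rank-reduced) with $\TRLC(\cP) \leq 1 - H_q(\tau^*)/\dim(\tau^*)$; schematically, $\TRLC(\cP) = \min_{\tau \in T}\max_{\tau^* \preceq \tau}\bigl(1 - H_q(\tau^*)/\dim(\tau^*)\bigr)$. The proof must therefore apply local similarity at that $\tau^*$ rather than at $\tau$: one uses $\Pr[\exists A \in \cM_{n,\tau},\,A \subseteq \cC] \leq \Pr[\exists A^* \in \cM_{n,\tau^*},\,A^* \subseteq \cC]$ (any $\tau$-matrix in $\cC$ has a $\tau^*$-submatrix in $\cC$), and Markov on the $\tau^*$-expectation now gives exponent $H_q(\tau^*) - \dim(\tau^*)(1-R) \leq -\eps\dim(\tau^*) \leq -\eps$. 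With this correction your union-bound and rank-reduction bookkeeping goes through.
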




For both our codes, we will in fact prove that for any matrix $A \in \F_q^{n \times b}$ of rank $b$, we have $\Pr[A \subseteq \cC] = q^{-n(1-R)b}$. We demonstrate below that this readily implies local similarity. (This is also implicit in~\cite{GM22}, but we spell it out for clarity, and as it is a fact that we will use twice.)

\begin{proposition} \label{prop:prob-bound-loc-sim}
    Let $n \in \N$, $q$ a prime power and $\ell \in \N$ such that $\frac{n}{\log_qn} \geq \omega_{n \to \infty}(q^{2\ell})$. Let $\cC \leq \F_q^n$ be a linear code sampled from some ensemble. Suppose that for any $A \in \F_q^{n \times b}$ of rank $b$, we have $\Pr[A \subseteq \cC]\leq q^{-n(1-R)b}$ for some $R\in [0,1]$. Then $\cC$ is $\ell$-locally-similar to $\RLC(R)$.

    In particular, for any $\ell$-local property $\cP$ over $\F_q^n$ such that $R \leq \TRLC(\cP)-\eps$, 
    \[
        \Pr_{\cC}[\cC \text{ does not satisfy }\cP] \leq q^{-n(\eps - o_{n \to \infty}(1))} \ .
    \]
\end{proposition}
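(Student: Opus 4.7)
The plan is to unfold the definition of $\ell$-local similarity and bound the expected count of contained matrices type-by-type, then immediately invoke \Cref{prop:GM22} for the ``in particular.''

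First I would fix $1 \le b \le \ell$ and a full-rank distribution $\tau \sim \F_q^b$ (i.e., $\dim(\tau) = b$) and expand the expectation by linearity:
\[
\Eop_{\cC}\bigl[|\{A \in \cM_{n,\tau} : A \subseteq \cC\}|\bigr] \;=\; \sum_{A \in \cM_{n,\tau}} \Pr[A \subseteq \cC].
\]
The key structural observation I would then make is that every $A \in \cM_{n,\tau}$ has (matrix) rank exactly $b$. Indeed, the rows of $A$ are (with multiplicities) precisely the elements of $\supp(\tau) \subseteq \F_q^b$, so the row-span of $A$ equals $\mathrm{span}(\supp(\tau))$, whose dimension is $\dim(\tau) = b$ by assumption. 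Thus the row-rank, and hence the rank, of $A$ is $b$.

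With this in hand, the hypothesis of the proposition applies to every $A \in \cM_{n,\tau}$, giving $\Pr[A \subseteq \cC] \le q^{-n(1-R)b}$. Combining with the standard upper bound $|\cM_{n,\tau}| \le q^{n H_q(\tau)}$ from \eqref{eq:estimate-on-type-class}, I would conclude
\[
\Eop_{\cC}\bigl[|\{A \in \cM_{n,\tau} : A \subseteq \cC\}|\bigr] \;\le\; q^{n H_q(\tau)} \cdot q^{-n(1-R)b} \;=\; q^{(H_q(\tau) - b(1-R))n},
\]
which is exactly the requirement of \Cref{def:local-sim}. Hence $\cC$ is $\ell$-locally similar to $\RLC(R)$.

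For the ``in particular'' clause, I would simply feed this conclusion into \Cref{prop:GM22}, whose hypotheses on $n$, $q$, and $\ell$ match the present proposition's hypotheses, yielding the claimed bound $\Pr_{\cC}[\cC \text{ does not satisfy }\cP] \le q^{-n(\eps - o_{n \to \infty}(1))}$. There is no real obstacle here: the only substantive point is recognizing that a full-rank empirical distribution forces the matrix itself to be full column rank, after which everything is a one-line computation combined with a citation.
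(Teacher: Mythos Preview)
Your proposal is correct and follows essentially the same approach as the paper's proof: expand the expectation over $\cM_{n,\tau}$, apply the hypothesis to each full-rank $A$, bound $|\cM_{n,\tau}|$ via \eqref{eq:estimate-on-type-class}, and then invoke \Cref{prop:GM22}. If anything, you are slightly more explicit than the paper in justifying why $\dim(\tau)=b$ forces every $A \in \cM_{n,\tau}$ to have matrix rank $b$.
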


\begin{proof}
    For any $1\leq b\leq \ell$ and every distribution $\tau\sim \F_q^b$, we must show that 
	 $$
	   \Eop_{\cC}\left[|\{A \in \cM_{n,\tau}: A \subseteq \cC\}|\right] \leq q^{(H_q(\tau)-b(1-R))n} \ .
	$$
	By assumption, for any $A\in \cM_{n,\tau}$,
	$$
	   \Pr[A\subseteq \cC]=q^{-(1-R)\rank(A)} \ .
	$$
	We can thus take a union bound over all matrices in $\cM_{n,B\tau}$, recalling from~\Cref{eq:estimate-on-type-class} that
    \[
        |\cM_{n,\tau}|\leq q^{n \cdot H_q(\tau)} \ , 
    \]
    and so
    \begin{align*}
    \Eop_{\cC}\left[|\{A \in \cM_{n,\tau}: A \subseteq \cC\}|\right]=|\cM_{n,\tau}|q^{-b(1-R)n}\leq q^{(H_q(\tau)-b(1-R))n} \ . & \qedhere
 \end{align*}
\end{proof}

Before concluding this section, we elucidate why a code ensemble that $L$-locally similar to random linear codes achieves the Elias bound with high probability. This is implicit in, e.g.,~\cite{MRRSW}, but we spell it out for completeness (and as prior works did not focus explicitly on achieving the Elias bound, as we do). In the remainder of the paper, we will just focus on proving our codes are $L$-locally similar to RLCs. 

Of course, one could prove a similar claim for list-recoverability; however, for this property the results for random linear codes are a bit less settled. However, it is indeed known that random linear codes achieve list-recovery capacity, and so this property will directly translate to sufficiently locally similar codes as well. Furthermore, we do emphasize that any improved result concerning the list-recoverability of random linear codes will immediately apply to our code ensembles as well. 

\begin{proposition}
    Let $q$ be a prime power. Let $L,k,n \in \N$, let $R=k/n$, and let $\eps=1/L>0$. For sufficiently large $n$ (compared to $q$ and $L$), there exists a constant $c>0$ such that the following holds. 
    
    Fix $\rho$ such that $R=1-h_q(\rho)-c\eps$, i.e., $\rho = h_q^{-1}(1-R+c\eps)$. Let $\cC\leq \F_q^n$ be a random code of rate $R$ which is $L$-locally similar to $\RLC(R)$. Then, with probability at least $1-q^{-\eps(n-o_{n\to\infty}(1))}$, $\cC$ is $(\rho,L)$-list-decodable. 
\end{proposition}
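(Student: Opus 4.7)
The plan is to view this as a direct combination of two ingredients: (i) $(\rho,L)$-list-decodability is an $L$-local property, and (ii) RLCs of rate just below $1-h_q(\rho)$ are known to be $(\rho,L)$-list-decodable with high probability (the Elias bound for RLCs). Once both ingredients are in hand, \Cref{prop:prob-bound-loc-sim} (equivalently \Cref{prop:GM22}) will transfer the guarantee from RLCs to any ensemble that is $L$-locally similar to RLCs.

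First I would spell out that the property $\cP = \cP_{\rho,L}$ of being $(\rho,L)$-list-decodable is $L$-local in the sense of \Cref{def:local-prop}: one can take the forbidden set $T$ to consist of all distributions $\tau \sim \F_q^L$ whose support is contained in some translate of a Hamming ball of radius $\rho$, i.e., there exists $\bz \in \F_q^n$ (equivalently, by row-symmetry, a single coordinate-wise threshold captured by $\tau$) such that every vector in $\supp(\tau)$ is within relative distance $\rho$ from $\bz$. A formal justification of this is already discussed in \Cref{sec:local-props} and in \cite{GMRSW22}, so I would simply cite that fact rather than re-deriving it.

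Next I would invoke the known threshold-rate bound for list-decodability of RLCs. Namely, results from \cite{ZyablovP81,GuruswamiHSZ02,GHK11,CheraghchiGV13,Wootters13,RudraW14a,RudraW18,GuruswamiLMRSW20} (depending on the regime of $q$ and $L$) guarantee that there is an absolute constant $c>0$ such that
\[
    \TRLC(\cP_{\rho,L}) \,\geq\, 1 - h_q(\rho) - \frac{c}{L} \,=\, 1 - h_q(\rho) - c\eps \ .
\]
Choosing $\rho = h_q^{-1}(1-R+c\eps)$ exactly as in the statement yields $R = 1 - h_q(\rho) - c\eps \leq \TRLC(\cP_{\rho,L})$. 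To be able to apply \Cref{prop:prob-bound-loc-sim} with a nontrivial slack parameter, I would actually pick the constant in the statement to be a bit larger than the one coming from the RLC bound: say use $c'=2c$ in the definition of $\rho$, so that $R \leq \TRLC(\cP_{\rho,L}) - c\eps$, leaving a slack of $\eps$ (up to a constant factor, which we absorb by adjusting $c$).

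Finally, since $\cC$ is $L$-locally similar to $\RLC(R)$ by hypothesis and $R \leq \TRLC(\cP_{\rho,L}) - \eps$, \Cref{prop:prob-bound-loc-sim} (whose hypothesis $n/\log_q n \geq \omega(q^{2L})$ is met for $n$ large compared to $q$ and $L$) gives
\[
    \Pr_{\cC}\bigl[\cC \text{ is not } (\rho,L)\text{-list-decodable}\bigr] \,\leq\, q^{-n(\eps - o_{n\to\infty}(1))} \ ,
\]
which is the claimed bound. The only genuinely nontrivial step is the citation in step two: matching up the correct Elias-bound result for RLCs for the ambient alphabet size $q$ and list-size $L$, so that the constant $c$ is uniform in these parameters. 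Everything else is bookkeeping.
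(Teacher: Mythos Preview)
Your proposal is correct and follows essentially the same route as the paper: identify $(\rho,L)$-list-decodability as an $L$-local property, cite known RLC list-decodability results (the paper specifically uses \cite{LiW18} for $q=2$ and \cite{GHK11} for general $q$) to lower-bound $\TRLC(\cP_{\rho,L})$, inflate the constant to leave an $\eps$ gap, and then apply \Cref{prop:GM22}. The only cosmetic difference is that the paper sets $c=c'+1$ (getting slack exactly $\eps$) whereas you double the constant; also, strictly speaking you should invoke \Cref{prop:GM22} rather than \Cref{prop:prob-bound-loc-sim}, since the latter assumes the stronger pointwise containment bound rather than mere $L$-local similarity (the paper makes the same slip).
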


\begin{proof}
    Let $\cP$ denote the property of being $(\rho,L)$-list-decodable, which we recall from the prior discussion is $L$-local. Appealing to either~\cite[Theorem~2.4]{LiW18} (for the $q=2$ case) or~\cite[Theorem~6]{GHK11} (for the general $q$ case), we have that there exists a constant $c'>0$ such that a random linear code of rate $R'=1-h_q(\rho)-c'\eps$ is $(\rho,L)$-list-decodable with probability at least $1-\exp(-\Omega(n))$. In particular, for sufficiently large $n$, \Cref{prop:sharp-threshold} implies $R'\leq \TRLC(\cP)$.

    Let $c=c'+1$, and so $R = R'-\eps \leq \TRLC(\cP)-\eps$. \Cref{prop:prob-bound-loc-sim} then implies that $\cC$ satisfies $\cP$ with probability at least $1-q^{-\eps(n-o_{n\to\infty}(1))}$, as desired. 
\end{proof}
\section{Pseudorandom Codes from Linearized Polynomials} \label{sec:linearized-polys}
We now present our construction based on linearized polynomials. For a prime power $q$, block-length $n$ and target dimension $k$, fix an $\F_q$-linear subspace $V$ of $\F_{q^n}$. Fix a basis $\alpha_1,\dots,\alpha_k$ for $V$ and extend it to a basis $\alpha_1,\dots,\alpha_n$ for all of $\F_{q^n}$. We also define the $\F_q$-linear map $\varphi:\F_{q^n} \to \F_q^n$ by mapping $\alpha = x_1\alpha_1 + x_2\alpha_2 + \cdots + x_n\alpha_n \mapsto (x_1,x_2,\dots,x_n)$, where $x_1,\dots,x_n\in\F_q$.

For a target locality $\ell$, we now define a linear code by sampling $f(X) \in \F_{q^n}[X]$ uniformly at random among all linearized polynomials of $q$-degree at most $\ell-1$. That is, $f_0,f_1,\dots,f_{\ell-1}$ are sampled independently and uniformly at random from $\F_{q^n}$, and we then set $f(X)=\sum_{i=0}^{\ell-1}f_iX^{q^i}$. 

We then define the code 
\[
    \cC := \varphi(f(V)) = \{\varphi(f(\alpha))\suchthat \alpha \in V\} \ .
\]
We say that a code sampled as above is a $\PCLP(R,\ell)$ code. 

Recalling that linearized polynomials define $\F_q$-linear maps (\Cref{prop:linearized-polys-linear}), it follows readily that $\cC$ is indeed linear. Hence, to any such $\cC$ we can associate a generator matrix $G \in \F_q^{k \times n}$.

\medskip

We first observe that if $\alpha_1,\dots,\alpha_b \in V$ with $1 \leq b \leq \ell$ are linearly independent, then over the choice of $f$, $f(\alpha_1),\dots,f(\alpha_k)$ are independent and uniformly random elements of $\F_{q^n}$

\begin{lemma} \label{lem:linearized-poly-evals}
    Let $1 \leq b \leq \ell$. Suppose $\alpha_1,\dots,\alpha_b \in V$ are linearly independent and let $\beta_1,\dots,\beta_b \in \F_{q^n}$. Then 
    \[
        \Pr\left[\bigwedge_{i=1}^b f(\alpha_i)=\beta_i \right] = q^{-nb} \ .
    \]
\end{lemma}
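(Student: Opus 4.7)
The plan is to recognize the joint evaluation map $(f_0,\dots,f_{\ell-1}) \mapsto (f(\alpha_1),\dots,f(\alpha_b))$ as an $\F_{q^n}$-linear map given by a Moore matrix, and then to argue it is surjective so that its output distribution is uniform whenever the input is uniform.

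First, I would write out the evaluations explicitly:
\[
f(\alpha_j) \;=\; \sum_{i=0}^{\ell-1} f_i\, \alpha_j^{q^i}, \qquad j\in[b].
\]
This shows that the random vector $(f(\alpha_1),\dots,f(\alpha_b))^\top \in \F_{q^n}^b$ equals $M\cdot(f_0,\dots,f_{\ell-1})^\top$ where $M\in\F_{q^n}^{b\times\ell}$ is the Moore-type matrix $M_{j,i}=\alpha_j^{q^i}$. Thus the evaluation map is $\F_{q^n}$-linear.

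Second, I would invoke the classical fact that a Moore matrix $(\alpha_j^{q^{i}})_{j\in[b],\,i\in\{0,\dots,b-1\}}$ is nonsingular over $\F_{q^n}$ precisely when $\alpha_1,\dots,\alpha_b$ are $\F_q$-linearly independent. Since by hypothesis $\alpha_1,\dots,\alpha_b$ are $\F_q$-linearly independent and $b\le\ell$, the leftmost $b\times b$ block of $M$ is invertible, so $M$ has full row rank $b$. This is the only nontrivial ingredient, and I would either just cite it or include a two-line deduction from the standard Moore determinant identity.

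Third, because $M$ has full row rank, the $\F_{q^n}$-linear map $\Phi:\F_{q^n}^\ell\to\F_{q^n}^b$ sending $(f_0,\dots,f_{\ell-1})$ to $(f(\alpha_1),\dots,f(\alpha_b))$ is surjective, and every fiber has the same cardinality $q^{n(\ell-b)}$. Since $(f_0,\dots,f_{\ell-1})$ is uniform on $\F_{q^n}^\ell$, it follows that $(f(\alpha_1),\dots,f(\alpha_b))$ is uniform on $\F_{q^n}^b$. Hence for any prescribed $(\beta_1,\dots,\beta_b)\in\F_{q^n}^b$,
\[
\Pr\!\left[\textstyle\bigwedge_{i=1}^b f(\alpha_i)=\beta_i\right]\;=\;\frac{1}{|\F_{q^n}^b|}\;=\;q^{-nb},
\]
as claimed. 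The only step requiring care is the Moore nonsingularity input; everything else is formal.
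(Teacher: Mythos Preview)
Your proof is correct and follows essentially the same approach as the paper: both express the evaluation map via the Moore matrix $M_{j,i}=\alpha_j^{q^i}$, invoke the fact that $M$ has rank $b$ because $\alpha_1,\dots,\alpha_b$ are $\F_q$-linearly independent and $b\le\ell$, and conclude that $M\bm f$ is uniform over $\F_{q^n}^b$. Your version is slightly more explicit about isolating the leftmost $b\times b$ block and counting fiber sizes, but there is no substantive difference.
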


\begin{proof}
    Writing $\bm{f} = (f_0,f_1,\dots,f_{\ell-1})^\top \in \F_{q^n}^\ell$, $\bm{\beta} = (\beta_1,\dots,\beta_b)^\top \in \F_{q^n}^b$ and defining the \emph{Moore matrix}
    \[
        M = \begin{pmatrix}
            \alpha_1 & \alpha_1^q & \cdots & \alpha_1^{q^{\ell-1}} \\
            \alpha_2 & \alpha_2^q & \cdots & \alpha_2^{q^{\ell-1}} \\
            \vdots & \vdots & \ddots & \vdots \\
            \alpha_b & \alpha_b^q & \cdots & \alpha_b^{q^{\ell-1}} \\
        \end{pmatrix} \ ,
    \]
    the event $\bigwedge_{i=1}^b f(\alpha_i)=\beta_i$ is equivalent to $M\bm{f} = \bm{\beta}$. As $b \leq \ell$ and $\alpha_1,\alpha_2,\dots,\alpha_b$ are linearly independent, the matrix $M$ has rank $b$. Hence, $M\bm{f}$ is distributed uniformly in $\F_{q^n}^b$; in particular, the probability that it takes on the value $\bm{\beta}$ is $|\F_{q^n}^b|^{-1} = q^{-nb}$.
\end{proof}

That the code $\cC$ is $\ell$-locally similar to $\RLC(R)$, where $R:=k/n$, will follows from the following lemma. 

\begin{lemma} \label{lem:prob-matrix-contained-linearized}
    Let $1 \leq b \leq \ell$ and let $A \in \F_q^{n \times b}$ be of rank $b$. Then $\Pr[A \subseteq \cC] \leq q^{-n(1-R)b}$.
\end{lemma}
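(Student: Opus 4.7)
The plan is to reduce the event $A\subseteq \cC$ to a union, over all tuples $(\alpha_1,\dots,\alpha_b)\in V^b$, of events that compute directly via \Cref{lem:linearized-poly-evals}. Writing $\beta_j := \varphi^{-1}(\bm{a}_j)\in \F_{q^n}$ for each column $\bm{a}_j$ of $A$, the fact that $\varphi$ is an $\F_q$-linear bijection means that $\bm{a}_j\in \cC$ if and only if there exists some $\alpha_j\in V$ with $f(\alpha_j)=\beta_j$. Hence
\[
    \Pr[A\subseteq \cC] \;\leq\; \sum_{(\alpha_1,\dots,\alpha_b)\in V^b}\Pr\!\left[\bigwedge_{j=1}^{b} f(\alpha_j)=\beta_j\right],
\]
and the plan is to bound each summand and then count the surviving tuples.

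First I would show that every \emph{linearly dependent} tuple contributes $0$. Suppose $c_1,\dots,c_b\in \F_q$ are not all zero and $\sum_j c_j\alpha_j=0$. Since $f$ is $\F_q$-linear (\Cref{prop:linearized-polys-linear}), conditioning on the event $f(\alpha_j)=\beta_j$ for all $j$ forces $\sum_j c_j\beta_j=f\!\left(\sum_j c_j\alpha_j\right)=0$. Because $\varphi^{-1}$ is an $\F_q$-linear bijection, this is equivalent to $\sum_j c_j\bm{a}_j=0$, contradicting the assumption that $A$ has rank $b$ (i.e.\ that its columns are $\F_q$-linearly independent). Thus the corresponding summand is $0$.

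Next, for a \emph{linearly independent} tuple $(\alpha_1,\dots,\alpha_b)\in V^b$ with $b\leq \ell$, \Cref{lem:linearized-poly-evals} directly yields
\[
    \Pr\!\left[\bigwedge_{j=1}^{b} f(\alpha_j)=\beta_j\right] \;=\; q^{-nb}.
\]
It remains to count linearly independent $b$-tuples inside the $k$-dimensional space $V$: this number is $\prod_{j=0}^{b-1}(q^{k}-q^{j})\leq q^{kb}$. Combining,
\[
    \Pr[A\subseteq \cC] \;\leq\; q^{kb}\cdot q^{-nb} \;=\; q^{-n(1-R)b},
\]
which is the claimed bound.

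There is no real obstacle here; the one point to watch is the argument ruling out linearly dependent tuples, which is the only place where the hypothesis $\rank(A)=b$ is used, and which depends on the linearity of both $f$ and $\varphi$. Note also that the union bound may overcount (if $f$ fails to be injective on $V$ there can be several tuples $(\alpha_1,\dots,\alpha_b)$ witnessing $A\subseteq \cC$), but overcounting is harmless since we only need an upper bound. Combined with \Cref{prop:prob-bound-loc-sim}, this lemma immediately gives $\ell$-local similarity of $\PCLP(R,\ell)$ to $\RLC(R)$, and hence the Elias bound for $L=\ell$.
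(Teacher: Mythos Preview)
Your proof is correct and follows essentially the same approach as the paper. The only cosmetic difference is that the paper routes the argument through the generator matrix $G$ and full-rank matrices $X\in\F_q^{b\times k}$ (showing $XG$ is uniform when $\rank(X)=b$), whereas you work directly with tuples $(\alpha_1,\dots,\alpha_b)\in V^b$; these two parametrizations are in bijection via the isomorphism $\varphi':\F_q^k\to V$, and the three steps---restrict to linearly independent tuples using $\rank(A)=b$, apply \Cref{lem:linearized-poly-evals} for a per-term bound of $q^{-nb}$, and count at most $q^{kb}$ surviving tuples---are identical.
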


\begin{proof}
    Let $G \in \F_q^{k \times n}$ be a (random) generator matrix for $\cC$. 
    From \Cref{lem:linearized-poly-evals}, we can conclude the following: if $X \in \F_q^{b \times k}$ is of rank $b$, $XG$ is uniformly random over $\F_q^{b \times n}$. Indeed, $G$ implements a linear map of the form $\varphi \circ f \circ \varphi'$ where $\varphi':\F_q^k \to V$ is an isomorphism. Fixing a basis $\bv_1,\dots,\bv_b$ for the row-space of $X$ and putting $\alpha_i=\varphi'(\bv_i)$, \Cref{lem:linearized-poly-evals} says that the tuple $(f(\alpha_1),\dots,f(\alpha_b))$ is distributed uniformly over $\F_{q^n}^b$ (as an isomorphism preserves linear independence). Hence, as $\varphi$ and $\varphi'$ are bijective, the tuple $(\varphi \circ f\circ \varphi'(\bv_1),\dots,\varphi'\circ f \circ \varphi'(\bv_b))$ is uniform over $(\F_q^n)^b$, which implies that $XG$ must be uniform over $\F_q^{b \times n}$, as desired. 
    
    We therefore have
    \begin{align}
        \Pr[A \subseteq \cC] &= \Pr[\exists X \in \F_q^{k \times b} \suchthat XG=A] = \sum_{X \in \F_q^{k \times b}}\Pr[XG=A] \nonumber \\
        &\leq \sum_{\substack{X \in \F_q^{k \times b} \\ \rank(X)=b}} \Pr[XG=A] \label{eq:rank-condn-1}\\ 
        &= \sum_{\substack{X \in \F_q^{k \times b} \\ \rank(X)=b}} q^{-nb} \leq q^{kb} \cdot q^{-nb} = q^{-(1-R)nb} \label{eq:applying-obs} \ .
    \end{align}
    In the above, \eqref{eq:rank-condn-1} follows from the fact that if $X$ has $\rank(X)<b$, then $\rank(XG)<b$, so it can't be that $XG=A$ as $\rank(A)=b$. The first equality of \eqref{eq:applying-obs} follows from the above observation. 
\end{proof}

In order to analyze the dual code $\cC^\perp$, we characterize it as a $\PCLP(1-R,1)$, making use of concepts introduced in \Cref{sec:alg-concepts}. To do this, given $f(X) = \sum_{i=0}^{\ell-1}f_i X^{q^i}$ where we assume for now that $f_0\neq 0$, let $g(X) = f_0^{-1}\cdot f(X) = X+\sum_{i=1}^{\ell-1}\tfrac{f_i}{f_0}X^{q^i}$. We then define  
\[
    W := g(V)^\perp = \{\beta \in \F_{q^n} \suchthat \forall \alpha \in g(V),~\Tr(\alpha\beta)=0\} \ ,
\]
where we recall $V$ was an $\F_q$-subspace of dimension $k$. For now, let us assume $g(V)$ has dimension $k$; below we will establish that this holds with probability $1-q^{-\Omega(n)}$. It follows then that $W$ is an $\F_q$-subspace of dimension $n-k$. 

Now, from the basis $\alpha_1,\dots,\alpha_n$ for $\F_{q^n}$, let $\beta_1,\dots,\beta_n$ be the dual basis, which we recall means 
\[
    \Tr(\alpha_i \beta_j) = \delta_{ij} \ ,
\]
where $\delta_{ij}$ is the Kronecker $\delta$-function. Finally, define the $\F_q$-linear isomorphism $\psi:\F_{q^n}\to\F_q^n$ by mapping $\beta = \sum_i y_i\beta_i \mapsto (y_1,\dots,y_n)$. We may now characterize the dual code as follows.

\begin{lemma} \label{lem:dual-characterization}
    Let $f(X)=\sum_{i=0}^{\ell-1}f_i X^{q^i} \in \F_{q^n}[X]$ be a linearized polynomial of $q$-degree at most $\ell-1$ with $f_0 \neq 0$, and let $V,W$ be as above. Then $\varphi(f(V))^\perp= \psi(f_0^{-1}W)$. 
\end{lemma}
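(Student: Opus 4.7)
My plan is to prove the lemma by establishing one inclusion directly from the definitions and then matching dimensions to conclude equality.

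The first step, and the heart of the argument, is the following inner-product identity: for any $a,b \in \F_{q^n}$, we have
\[
    \langle \varphi(a), \psi(b) \rangle = \Tr(ab) \ .
\]
This follows by writing $a = \sum_i x_i \alpha_i$ and $b = \sum_j y_j \beta_j$, expanding $\Tr(ab) = \sum_{i,j} x_i y_j \Tr(\alpha_i \beta_j)$, and invoking the dual basis property $\Tr(\alpha_i \beta_j) = \delta_{ij}$ from \Cref{lm:dualbasis}, which collapses the double sum to $\sum_i x_i y_i = \langle \varphi(a), \psi(b)\rangle$.

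Next I would establish the inclusion $\psi(f_0^{-1}W) \subseteq \varphi(f(V))^\perp$. Any element of the left-hand side is $\psi(f_0^{-1}\beta)$ for some $\beta \in W = g(V)^\perp$. For any $\alpha \in V$, the identity above gives
\[
    \langle \varphi(f(\alpha)), \psi(f_0^{-1}\beta)\rangle = \Tr\!\left(f(\alpha) \cdot f_0^{-1}\beta\right) = \Tr\!\left(g(\alpha)\, \beta\right) = 0 \ ,
\]
where the last equality uses $g(\alpha) \in g(V)$ and $\beta \perp g(V)$ by definition of $W$. This shows every vector in $\psi(f_0^{-1}W)$ is orthogonal (in the standard $\F_q^n$ inner product) to every generator of $\varphi(f(V))$, giving containment in the dual.

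Finally, I would close the argument by a dimension count. Multiplication by the nonzero element $f_0^{-1}$ is an $\F_q$-linear bijection on $\F_{q^n}$, so $\dim_{\F_q}(f_0^{-1}W) = \dim_{\F_q}(W) = n - k$ (using the assumption, noted just before the lemma, that $\dim g(V) = k$). Since $\psi$ is an $\F_q$-linear isomorphism, $\dim \psi(f_0^{-1}W) = n-k$. On the other side, since multiplication by $f_0$ is also a bijection and $f(V) = f_0 \cdot g(V)$, we have $\dim f(V) = k$, hence $\dim \varphi(f(V)) = k$, so $\dim \varphi(f(V))^\perp = n-k$. Equality of dimensions together with the established inclusion yields $\varphi(f(V))^\perp = \psi(f_0^{-1}W)$.

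The only mildly delicate point is keeping straight the two distinct notions of orthogonality: the trace-based orthogonality on $\F_{q^n}$ used to define $W$, versus the standard Hamming inner product on $\F_q^n$ used to define $\cC^\perp$. The inner-product identity in the first step is precisely what bridges these two, and once it is in hand the rest is essentially bookkeeping.
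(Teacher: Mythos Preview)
Your proof is correct and follows essentially the same approach as the paper: both establish the inclusion $\psi(f_0^{-1}W)\subseteq \varphi(f(V))^\perp$ via the dual-basis/trace identity and then finish by a dimension count. Your organization is slightly cleaner in that you isolate the identity $\langle \varphi(a),\psi(b)\rangle=\Tr(ab)$ up front, whereas the paper derives it inline, but the substance is the same.
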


\begin{proof}
    Note that as $\dim_{\F_q}(g(V))+\dim_{\F_q}(f_0^{-1}W)=n$, it must also hold that $\dim(\varphi(g(V))+\dim(\psi(f_0^{-1}W))=n$, as the isomorphisms $\varphi$ and $\psi$ preserve the dimension of subspaces. Hence, it suffices to show $\varphi(f(V)) \perp \psi(f_0^{-1}W)$. 

    To establish this, let $\bx = (x_1,\dots,x_n)\in \varphi(f(V))$ and $\by = (y_1,\dots,y_n) \in \psi(f_0^{-1}W)$. Then $\alpha = \sum_{i=1}^nx_i \alpha_i \in f(V)$ and $\beta = \sum_{i=1}^n y_i \beta_i \in f_0^{-1}W$. As $\beta \in f_0^{-1}W$, $f_0\beta \in W$. Similarly, $f_0^{-1}\alpha \in f_0^{-1}\cdot f(V) = g(V)$. As $W \perp g(V)$ by construction, we find 
    \[
        \Tr(\alpha \beta) = \Tr(f_0^{-1}\alpha \cdot f_0\beta) = 0 \ .
    \]
    Thus, 
    \begin{align*}
        \langle \bx,\by\rangle &= \sum_{i=1}^nx_iy_i = \sum_{i=1}^nx_iy_i\Tr(\alpha_i\beta_i)\\
        &= \sum_{i=1}^n\sum_{j=1}^nx_iy_j\Tr(\alpha_i\beta_j) = \Tr\left(\sum_{i=1}^n\sum_{j=1}^nx_iy_j\alpha_i\beta_j\right)\\
        &= \Tr\left(\left(\sum_{i=1}^nx_i\alpha_i\right)\left(\sum_{j=1}^ny_j\beta\right)\right) = \Tr(\alpha\beta)=0 \ . \qedhere
    \end{align*}
\end{proof}

Thus, assuming $f(V)$ is of $\F_q$-dimension $k$, the dual of a $\PCLP(R,\ell)$ code is essentially also a $\PCLP(1-R,1)$ code! Formally, we can derive the following. 

\begin{theorem} \label{thm:PCLP-main}
    Let $k \leq n \in \N$, $q$ a prime power. Put $R = k/n$. Let $\ell \in \N$, and let $f(X) = \sum_{i=0}^{\ell-1}f_iX^{q^i}$ where $f_0,f_1,\dots,f_{\ell-1}$ are sampled independently and uniformly at random from $\F_{q^n}$. Fix $\eps>0$. The following holds: 
    \begin{itemize}
        \item Let $\cP$ be an $\ell$-local property for which $R \leq \TRLC(\cP)-\eps$. Then, $\cP$ is also satisfied with probability at least $q^{-n(\eps - o_{n\to\infty}(1))}$ by a $\PCLP(R)$. 
        \item With probability at least $1-q^{-n(\eps - o_{n\to\infty}(1))}$, the dual distance of a $\PCLP(R)$ is at least $h_q^{-1}(R-\eps)$. 
        \item With probability at least $1-q^{-(1-R)n}$, the code $\cC$ has rate $R$.
    \end{itemize}
    In particular, all items hold with probability at least $1-q^{-n(\eps - o_{n\to\infty}(1))}$. 
\end{theorem}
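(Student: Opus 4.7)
My plan is to establish the three bullets separately and then combine them by a union bound; in the regime of interest where $\eps \leq 1-R$, the overall failure probability is dominated by $q^{-n(\eps - o(1))}$.

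\textbf{Rate.} A $\PCLP(R,\ell)$ code $\cC = \varphi(f(V))$ has dimension exactly $k = Rn$ precisely when $f$ restricts injectively to $V$, i.e.\ when no nonzero $\alpha \in V$ satisfies $f(\alpha) = 0$. By \Cref{lem:linearized-poly-evals} applied with $b=1$, for each fixed $\alpha \neq 0$ we have $\Pr[f(\alpha) = 0] = q^{-n}$. A union bound over the $q^k - 1$ nonzero elements of $V$ then gives $\Pr[\mathrm{rate}(\cC) < R] < q^{k-n} = q^{-(1-R)n}$.

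\textbf{Property $\cP$.} This is immediate: \Cref{lem:prob-matrix-contained-linearized} verifies the hypothesis of \Cref{prop:prob-bound-loc-sim} for the $\PCLP(R,\ell)$ ensemble, whose conclusion is exactly the stated bound.

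\textbf{Dual distance.} This is the main content. The plan is to show that for every fixed nonzero $\bx \in \F_q^n$ one has $\Pr[\bx \in \cC^\perp] = q^{-k}$, and then conclude by a Gilbert--Varshamov-style union bound. Writing $\bx = \psi(\beta)$ with $\beta \neq 0$, the dual-basis identity $\langle \varphi(a),\psi(b)\rangle = \Tr(ab)$ (which follows from how $\psi$ is built from the dual basis produced in \Cref{lm:dualbasis}) recasts the event $\bx \in \cC^\perp$ as the conjunction $\Tr(\beta f(\alpha_j)) = 0$ for $j = 1, \dots, k$, where $\alpha_1, \dots, \alpha_k$ is any fixed $\F_q$-basis of $V$. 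These are $k$ linear conditions on $f$ viewed as an element of the $\ell n$-dimensional $\F_q$-space $\cL$ of linearized polynomials of $q$-degree $<\ell$. The key step is to show that these $k$ functionals on $\cL$ are $\F_q$-linearly independent. Suppose $\sum_j \lambda_j \Tr(\beta f(\alpha_j)) = 0$ for all $f \in \cL$; specializing to $f(X) = f_0 X$ and using $\F_q$-linearity of $\Tr$ yields $\Tr\bigl(f_0 \cdot \beta \sum_j \lambda_j \alpha_j\bigr) = 0$ for every $f_0 \in \F_{q^n}$. Non-degeneracy of the trace pairing then forces $\beta \sum_j \lambda_j \alpha_j = 0$, and since $\beta \neq 0$ and the $\alpha_j$ are $\F_q$-linearly independent, we conclude $\lambda_j = 0$ for each $j$. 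Hence the $k$ constraints cut out an $\F_q$-subspace of $\cL$ of codimension exactly $k$, giving $\Pr[\bx \in \cC^\perp] = q^{-k}$. A standard entropy estimate bounds the number of $\bx \in \F_q^n$ with fewer than $h_q^{-1}(R-\eps) \cdot n$ nonzero coordinates by $q^{(R-\eps)n + o(n)}$, so the union bound yields $\Pr\bigl[\mathrm{dist}(\cC^\perp) < h_q^{-1}(R-\eps)\bigr] \leq q^{-n(\eps - o(1))}$.

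The main obstacle is the independence argument for the $k$ trace functionals. A more conceptual alternative is to invoke \Cref{lem:dual-characterization}, which represents $\cC^\perp$ as $\psi(f_0^{-1} g(V)^\perp)$ and thereby exhibits it (conditionally on $f_0 \neq 0$) as a randomly-scaled version of a $\PCLP(1-R,1)$-style ensemble; that route reaches the same $q^{-k}$ bound after also conditioning on the high-probability event that $g$ restricts injectively to $V$, but the direct computation above seems slightly cleaner.
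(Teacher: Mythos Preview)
Your proposal is correct. The rate and property-$\cP$ bullets match the paper's proof essentially verbatim.

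For the dual distance, however, you take a genuinely different route. The paper proceeds via \Cref{lem:dual-characterization}: it conditions on the events $E_1=\{f|_V \text{ injective}\}$ and $E_2=\{f_0\neq 0\}$, argues that under this conditioning $\cC^\perp$ is distributed as a $\PCLP(1-R,1)$ code, and then invokes the first bullet (with $\ell=1$) to get the GV bound; the conditioning contributes the extra $q^{-(1-R)n}+q^{-n}$ terms that are then absorbed into the $o(1)$. Your argument instead computes $\Pr[\bx\in\cC^\perp]$ directly by writing the event as $k$ trace constraints $\Tr(\beta f(\alpha_j))=0$ and showing these are $\F_q$-independent functionals on the coefficient space of $f$ (by specializing to $f(X)=f_0X$ and using non-degeneracy of the trace pairing). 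This yields $\Pr[\bx\in\cC^\perp]=q^{-k}$ unconditionally, with no need for $E_1$, $E_2$, or \Cref{lem:dual-characterization} at all. Your approach is cleaner for the purpose of this theorem; the paper's approach has the compensating advantage that it exposes the structural fact that the dual is itself (conditionally) a $\PCLP(1-R,1)$ code, which is of independent interest. You correctly identify this alternative in your closing paragraph.
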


\begin{proof}
    For the first bullet-point, combining \Cref{lem:prob-matrix-contained-linearized} and \Cref{prop:prob-bound-loc-sim}, we find that a $\PCLP(R,\ell)$ code is $\ell$-locally similar to $\RLC(R)$. The desired result is then a direct consequence of the ``In particular'' part of \Cref{prop:prob-bound-loc-sim}.

    For the second bullet-point, consider the following sampling procedure:
    \begin{itemize}
        \item First, sample a $g(X) \in \F_{q^n}[X]$ to be a linearized polynomial of $q$-degree at most $\ell-1$ with the additional constraint that $g_0=1$, and subject to $g(V)$ having dimension $k$.
        \item Define then $W := g(V)^\perp$, which is of dimension $n-k = (1-R)n$ over $\F_q$.
        \item Sample $a \in \F_{q^n}$ uniformly at random.
        \item Output the code $\psi(aW)$. 
    \end{itemize}
    We remark that such a code is distributed identically to a $\PCLP(1-R,1)$ code. By the first bullet-point, it thus satisfies any $1$-local property that is satisfied with high probability by a $\RLC(1-R)$. 
    
    In particular, we can easily prove it has minimum distance at least $h_q^{-1}(R-\eps)$ with probability at least $1-q^{-\eps n}$. Formally, if $\cB$ is a code sampled as above and we set $\delta = h_q^{-1}(R-\eps)$ and $S = \{\bx \in \F_q^n:0<\wt(x) \leq \delta\}$, we have 
    \begin{align}
        \Pr[\exists \bc \in \cB \text{ such that } \wt(\bc) \leq \delta] &\leq \sum_{\bx \in S}\Pr[\bx \in \cB] \label{eq:union-bound}\\ 
        &\leq \sum_{\bx \in \cB} q^{-Rn} \label{eq:applying-lemma} \\
        &\leq q^{h_q(\delta)n}q^{-Rn} = q^{n(R-\eps) - Rn} = q^{-\eps n} \ .\label{eq:bound-on-ball}
    \end{align}
    Above, \eqref{eq:union-bound} applies the union bound; \eqref{eq:applying-lemma} applies \Cref{lem:prob-matrix-contained-linearized} (with $\ell=1$ and rate $1-R$) and the first inequality of \eqref{eq:bound-on-ball} is a standard estimate on the number of vectors of Hamming weight at most $\delta$. 

    Now, in the actual sampling procedure for a $\PCLP(R,\ell)$ code $\cC$, conditioned on the event $E_1$ that $f(V)$ is of $\F_q$-dimension $k$, and the event $E_2$ that $f_0\neq0$, we find that $\cC^\perp$ is distributed identically to $\cB$. Clearly, $\Pr[E_2]=1-q^{-n}$. 

    For $\Pr[E_1]$, note that $f(V)$ is of $\F_q$-dimension $k$ if and only if $\ker(f)\cap V=\{0\}$. Fixing a basis $\alpha_1,\dots,\alpha_k$ for $V$ of $\F_q$, note that $\ker(f)\cap V \neq \{0\}$ if and only if there exists $x_1,\dots,x_k \in \F_q$, not all $0$, for which $\alpha=\sum_{i=1}x_i\alpha_i \in \ker(f)$. That is, $f(\alpha)=0$. Denoting by $\bx = (x_1,\dots,x_k)$, $\bm{f}=(f_0,f_1,\dots,f_{\ell-1})^\top \in \F_{q^n}^\ell$ and $M \in \F_{q^n}^{n \times \ell}$ the Moore matrix defined via $M_{ir} = \alpha_i^{q^{r-1}}$, we have that $f(\alpha)=0$ if and only if 
    \[
        \bx M\bm{f} = \bm{0} \ .
    \]
    If $\bm{f}$ is uniformly random, then $M\bm{f}$ is a uniformly random vector in $\F_{q^n}^k$, and hence it is orthogonal to $\bx$ (which we here view as an element of $\F_{q^n}^k$) with probability $q^{-n}$. By taking a union bound over $q^k-1\leq q^k$ choices for $\bx$, we conclude that $\Pr[\neg E_1]\leq q^{k-n} = q^{-(1-R)n}$
    
    Hence, if $A$ denotes the event that $\cC^\perp$ has distance at least $h_q^{-1}(R-\eps)$ and $B$ denotes the event that a $\B$ code has distance at least $h_q^{-1}(R-\eps)$, we find
    \begin{align*}
        \Pr[B] &\geq \Pr[E_1\land E_2] \Pr[B|E_1\land E_2] \geq (1-\Pr[\neg E_1] - \Pr[\neg E_2])\Pr[A] \\
        &\geq 1-q^{-(1-R)n} - q^{-n} - q^{-\eps(n-o_{n \to \infty}(1))} \geq 1-q^{-\eps(n-o_{n \to \infty}(1))}\ ,
    \end{align*}
    as claimed. 

    The final bullet-point also follows from the above argument: if the event $E_1$ occurs, so $f(V)$ is of $\F_q$-dimension $k$, then certainly the code $\cC$ will have size $q^k = q^{Rn}$, and hence rate $R$. As we just argued $\Pr[E_1]\geq 1-q^{-(1-R)n}$, the last bullet-point follows.
\end{proof}
Before concluding this section, we state a couple more pleasant properties of $\PCLP(R,\ell)$ codes. Of particular note, we indicate that these codes can be encoded in quasi-linear time.

\begin{proposition} \label{prop:PCLP-sampling-and-encoding}
    Let $k \leq n \in \N$, $q$ a prime power. Put $R = k/n$ and let $\ell \in \N$. A $\PCLP(R,\ell)$ code can be sampled with 
    \[
        \ell \lceil n \log_2q\rceil
    \]
    uniformly random bits. 
    
    Furthermore, given $f(X) = \sum_{i=1}^nf_iX^{q^i}$ we can choose $V$ with a specific basis $\alpha_1,\dots,\alpha_k$ such that the encoding algorithm of a $\PCLP(R,\ell)$ code can be implemented in $O(n\log n)$ time.
\end{proposition}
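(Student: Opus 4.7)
The randomness count is immediate: sampling a $\PCLP(R,\ell)$ code amounts to choosing the coefficients $f_0,f_1,\ldots,f_{\ell-1}$ independently and uniformly from $\F_{q^n}$, each of which can be specified by $\lceil n\log_2q\rceil$ bits; the fixed data $V$, $\varphi$, and the basis $\alpha_1,\ldots,\alpha_n$ are deterministic preprocessing and consume no random bits.

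For the encoding, my plan is to pick a representation of $\F_{q^n}$ in which both field multiplication and the Frobenius $\alpha\mapsto\alpha^q$ are computable in $O(n\log n)$ arithmetic operations over $\F_q$. Concretely, I would fix once and for all an irreducible polynomial $p(X)\in\F_q[X]$ of degree $n$ and identify $\F_{q^n}$ with $\F_q[X]/(p(X))$, representing elements by their coefficient vectors of length $n$. I would then take $V$ to be the $\F_q$-span of $\{1,X,\ldots,X^{k-1}\}$, so the basis is $\alpha_i=X^{i-1}$ and the map $\F_q^k\to V$ sending a message $\bm=(m_0,\ldots,m_{k-1})$ to $\alpha=\sum_{i=0}^{k-1}m_iX^i$ is simply interpretation of coefficients; similarly, $\varphi$ extracting the coefficient vector of an element of $\F_q[X]/(p(X))$ is free.

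With these choices, encoding computes $\varphi(f(\alpha))=\varphi\bigl(\sum_{i=0}^{\ell-1}f_i\alpha^{q^i}\bigr)$ in three stages. Stage one iteratively produces $\alpha,\alpha^q,\alpha^{q^2},\ldots,\alpha^{q^{\ell-1}}$. The key observation is that for $\beta\in\F_q[X]/(p(X))$, since the coefficients of $\beta$ lie in $\F_q$ (so $b_j^q=b_j$), we have $\beta(X)^q=\beta(X^q)\pmod{p(X)}$: that is, one substitutes $X\mapsto X^q$ to get a sparse polynomial of degree at most $q(n-1)$ with only $n$ nonzero coefficients, and then reduces modulo $p(X)$. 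For constant $q$ the intermediate object has length $O(n)$, and fast polynomial division reduces it modulo $p(X)$ in $O(n\log n)$ time. Stage two multiplies each $f_i\cdot\alpha^{q^i}$ via FFT-based polynomial multiplication followed by a reduction modulo $p(X)$, each costing $O(n\log n)$. Stage three sums the $\ell$ resulting length-$n$ vectors in $O(\ell n)$ time and reads off $\varphi$. The total cost is $O(\ell\cdot n\log n)$, which is $O(n\log n)$ when $\ell=O(1)$.

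The main subtlety is computing the Frobenius quickly: in a polynomial basis multiplication is fast but naive Frobenius via the $\F_q$-linear matrix representation costs $\Theta(n^2)$, while a normal basis would give $O(1)$ Frobenius but lose fast multiplication. The substitution-and-reduce trick above threads the needle, exploiting the fact that in our representation raising to the $q$-th power is literally a degree-$q$ polynomial substitution, whose image has $O(n)$ nonzero terms whenever $q$ is constant. This matches the constant-$q$ regime of interest throughout the paper, and no additional random bits beyond the $\ell\lceil n\log_2q\rceil$ used to specify $f$ are consumed.
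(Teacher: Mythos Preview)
Your proof is correct and follows essentially the same approach as the paper: both choose the polynomial-basis representation $\F_{q^n}\cong\F_q[X]/(p(X))$ with $V=\spn\{1,X,\dots,X^{k-1}\}$, compute the Frobenius iterates $\alpha^{q^i}$ by the substitution $X\mapsto X^q$ (valid since the coefficients lie in $\F_q$) followed by reduction modulo $p(X)$, and then perform $\ell$ FFT-based multiplications $f_i\cdot\alpha^{q^i}$. Your explicit remark that the intermediate polynomial after substitution has degree $q(n-1)=O(n)$ for constant $q$ is a useful clarification that the paper leaves implicit.
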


\begin{proof}
    The randomness bound is immediate. For the quasi-linear encoding, let us fix a degree $n$ irreducible polynomial $m(X)$ and work with the representation $\F_{q^n} \cong \F_q[X]/(m(X))$. Let $\lambda$ be a root of $m(X)$; it is well-known that then $1,\lambda,\dots,\lambda^{n-1}$ is a basis for $\F_{q^n}$ over $\F_q$. Put $\alpha_i = \lambda^{i-1}$ for $i\in[n]$, and define the isomorphism $\iota:\F_q^k \to V = \mathrm{span}\{1,\lambda,\dots,\lambda^{k-1}\}$ via $\iota(x_1,\dots,x_k) = \sum_{i=1}^k x_i\lambda^{i-1}$. We wish to show that we can implement the linear map $\varphi\circ f\circ \iota:\F_q^k \to \F_q^n$ in $O(n\log n)$ time. 

    For $i=0,\dots,\ell-1$ let $g_i(X) \in \F_q[X]$ be polynomials of degree at most $n-1$ for which $f_i = g_i(\lambda)$. (Such polynomials come naturally from our representation $\F_{q^n} \cong \F_q[X]/(m(X))$.) Next, given a message $\bx = (x_1,\dots,x_k) \in \F_q^k$, let $h_0(X) = \sum_{i=0}^{k-1}x_iX^i$, and note then that $h_0(\lambda) = \iota(\bx)$. Similarly, we define polynomials $h_i(X) \in \F_q[X]$ of degree at most $n-1$ s.t. $h_i(X) = h_0(X)^{q^i} \mod m(X)$. Note that given $h_i(X)$, one can compute $h_{i+1}(X)$ in time $O(n\log n)$: indeed, if 
    \[
        h_i(X) = \sum_{j=0}^{n-1}h_{ij}X^i, ~ h_{ij} \in \F_q \ ,
    \]
    then $h_{i+1}(X) = \sum_{j=0}^{n-1}h_{ij}X^{iq}$, and then we need to reduce it modulo $m(X)$ which can be done in $O(n\log n)$ time via the Euclidean algorithm. Observe that this implies that $h_i(\lambda) = h_0(\lambda)^{q^i} = (\iota(\bx))^{q^i}$.

    Now, from the polynomials $g_0(X),g_1(X),\dots,g_{\ell-1}(X)$ and $h_0(X),h_1(X),\dots,h_{\ell-1}(X)$, we note that the encoding of $\bx$ -- which is $\varphi\circ f\left(\sum_{i=1}^k x_i \lambda^i\right)$ -- is defined in terms of all of the pairwise products $g_j(\lambda)h_j(\lambda)$ for all $j \in \{0,1,\dots,\ell-1\}$. Indeed, note that
    \[
        f\left(\iota(\bx)\right) = \sum_{j=0}^{\ell-1}f_j \cdot \iota(\bx)^{q^j} = \sum_{j=0}^{\ell-1}g_j(\lambda) \cdot h_j(\lambda) \ .
    \]
    To compute these products, we can perform $\ell=O(1)$ multiplications of the polynomials $g_j(X)h_j(X)$, each of which can be performed in $O(n\log n)$ applying a FFT, followed by a reduction modulo $m(X)$ which can be implemented in $O(n\log n)$ time as well. Upon observing that the coefficients of the resulting degree at most $n-1$ polynomial in $\F_q[X]$ yield the desired encoding of $\bx$ in $\F_q^n$, the proof is complete. 
\end{proof}

\subsection{Generalized Wozencraft Ensemble} \label{subsec:generalized-wozencraft-ensemble}

Before continuing, we remark that linearized polynomials also yield a pleasant generalization of the well-known \emph{Wozencraft ensemble}~\cite{M63}. We recall that such codes are obtained by randomly choosing $\beta \in \F_{q^k}$ and then defining
\[
    \cC := \{(\varphi(\alpha),\varphi(\beta \alpha)):\alpha \in \F_{q^k} \} \ ,
\]
where $\varphi:\F_{q^k} \to \F_q^k$ is a full-rank $\F_q$-linear map. Such codes are well-known to achieve the GV-bound with high probability; in fact, the argument shows that they are $1$-locally-similar to random linear codes of rate $1/2$. 

Defining $f(X) = \beta X$, which is a $q$-degree $0$ linearized polynomial, one can view the codewords of $\cC$ as $(\varphi(\alpha),\varphi(f(\alpha))$ for $\alpha \in \F_{q^k}$. It is natural to generalize this construction to allow $f$ to have $q$-degree at most $\ell-1$, and if $f$ is chosen uniformly at random among all such linearized polynomials an argument analogous to the one given above would demonstrate that this generalization of Wozencraft's ensemble is $\ell$-locally similar to $\RLC(1/2)$. 

We prefer our construction as it naturally allows for all choices of rate. However, we point out this construction now, as it technically only requires $\ell \lceil k\log_2q\rceil = \ell \lceil (n-k)\log_2q\rceil$ random bits to sample this code, where $k=n/2$ is the dimension. In light of \Cref{prop:randomness-lb}, this is actually \emph{optimal} if one hopes for a code distribution which is $\ell$-locally similar to $\RLC(1/2)$.\footnote{Up to the ceilings, which can be removed if $q$ is a power of $2$.} Hence, the lower bound is actually achievable. 

We can also generalize this construction slightly by defining
\[
    \cC_r := \left\{\big(\varphi(\alpha),\varphi(f_1(\alpha),\varphi(f_2(\alpha),\ldots,\varphi(f_r(\alpha)\big):\alpha \in \F_{q^k} \right\} \ ,
\]
where $f_1(x),\ldots,f_r(x)$ are random linearized polynomials of $q$-degree at most $\ell-1$ which are distributed mutually independent and uniformly at random. It is clear that the rate of $\cC_r$ is $1/r$. A similar argument to that given above proves that $\cC_r$ is $\ell$-locally similar to $\RLC(1/r)$. Since we only need $\ell \lceil(n-k)\log_2q\rceil$ random bits to sample this code, the lower bound $(1-R)n\log_2 q$ is also achievable for rates $R=\frac{1}{r}$ where $r\geq 2$ is an integer.

\section{Pseudorandom Codes from Row and Column Polynomials} \label{sec:row-col}

We now present our construction of pseudorandom codes from row and column polynomials, which not only satisfy $\ell$-local similarity to RLCs, but their \emph{duals} are also $\ell$-locally similar to RLCs.

Fix a locality parameter $\ell \in \N$, a block-length $n$ and a rate $0 < R < 1$ such that $k := Rn$ is an integer, and a prime power $q$. For a basis $\omega_1,\dots,\omega_n$ for $\F_{q^n}$ over $\F_q$, define the full-rank $\F_q$-linear map $\varphi:\F_{q^n} \to \F_q^n$ as $\varphi:\sum_{i=1}^nx_i \omega_i = (x_1,\dots,x_n)$. Similarly, define the surjective $\F_q$-linear map $\psi: \F_{q^n} \to \F_q^k$ as $\psi:\sum_{i=1}^nx_i \omega_i = (x_1,\dots,x_k)$. Finally, let $\alpha_1,\dots,\alpha_n \in \F_{q^n}$ be $n$ distinct elements.

Let $f(X),g(X) \in \F_{q^n}[X]$ be two random polynomials, sampled uniformly and independently amongst all polynomials of degree at most $\ell-1$. That is, $f(X) = \sum_{i=0}^{\ell-1}f_iX^i$ and $g(X) = \sum_{i=0}^{\ell-1}g_iX^i$, where the coefficients $\{f_i\}_{i=0}^{\ell-1}$ and $\{g_i\}_{i=0}^{\ell-1}$ are sampled independently and uniformly from $\F_{q^n}$. 

We now define the matrices $G',G'' \in \F_q^{k \times n}$ as follows: 
\begin{itemize}
    \item For $i \in [k]$, the $i$-th row of $G'$ is $\varphi(f(\alpha_i))$.
    \item For $j \in [n]$, the $j$-th column of $G''$ is $\psi(g(\alpha_j))$. 
\end{itemize}
We then define $G := G'+G''$, and take $\cC := \{\bx G\suchthat \bx \in \F_q^k\} \leq \F_q^n$. We say that a code sampled as above is a $\PCRCP(R,\ell)$ code. 

\begin{lemma} \label{lem:matrix-dist-6}
    Let $1 \leq b \leq \ell$ be an integer. The following hold.
    \begin{enumerate}
        \item Fix a matrix $X \in \F_q^{b \times k}$ of rank $b$. Then $XG$ is uniformly distributed over $\F_q^{b \times n}$. 
        \item Fix a matrix $X \in \F_q^{n \times b}$ of rank $b$. Then $GX$ is uniformly distributed over $\F_q^{k \times b}$. 
    \end{enumerate}
\end{lemma}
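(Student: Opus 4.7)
The two items of the lemma are symmetric, so I would prove item~1 in detail and obtain item~2 by an analogous argument with the roles of rows and columns exchanged and $g$ substituted for $f$. The first reduction is based on the decomposition $XG = XG' + XG''$: since $f$ and $g$ are sampled independently, so are $G'$ and $G''$, and hence so are $XG'$ and $XG''$. Adding an independent random variable to a uniform one preserves uniformity, so it suffices to show that $XG'$ alone is uniform over $\F_q^{b \times n}$; this lets me discard $G''$ from the analysis of item~1.

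Next I would unfold $XG'$ using the $\F_q$-linearity of $\varphi$: row $j$ of $XG'$ equals $\varphi\left(\sum_{i=1}^k X_{ji} f(\alpha_i)\right)$. Setting $y_j := \sum_{i=1}^k X_{ji} f(\alpha_i) \in \F_{q^n}$, the bijectivity of $\varphi$ makes uniformity of $XG'$ over $\F_q^{b \times n}$ equivalent to uniformity of the tuple $(y_1,\ldots,y_b) \in \F_{q^n}^b$ over the random choice of $f$. Expanding $f(T) = \sum_{r=0}^{\ell-1} f_r T^r$ with each $f_r$ independent and uniform in $\F_{q^n}$, I obtain $\bm{y} = A \bm{f}$, where $\bm{f} := (f_0,\ldots,f_{\ell-1})^\top$, $A := XV \in \F_{q^n}^{b \times \ell}$, and $V \in \F_{q^n}^{k \times \ell}$ is the Vandermonde-type matrix $V_{ir} = \alpha_i^{r-1}$. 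Since $\bm{f}$ is uniform in $\F_{q^n}^\ell$, the vector $\bm{y}$ is uniform in $\F_{q^n}^b$ if and only if $A$ has rank $b$ over $\F_{q^n}$.

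The main technical step, and the one I expect to be the chief obstacle, is proving $\rank_{\F_{q^n}}(XV) = b$. The bound $\rank(XV) \leq b$ is immediate from $b \leq \ell$ and $\rank_{\F_q}(X)=b$, and the content is the reverse inequality. I would attempt this by pairing two structural facts: the distinctness of $\alpha_1,\ldots,\alpha_k$ together with $b \leq \ell$ ensures that any $b$-subset of rows of $V$ spans a $b$-dimensional subspace of $\F_{q^n}^\ell$ (using a $b \times b$ Vandermonde minor on the first $b$ columns), and the rank-$b$ hypothesis on $X$ supplies $b$ linearly independent columns; combining these should let me exhibit an invertible $b \times b$ submatrix of $XV$. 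Item~2 is handled symmetrically: reduce $GX$ to $G''X$ using independence of $f$ and $g$; unfold each column of $G''X$ via $\F_q$-linearity of $\psi$ as $\psi(w_l)$ with $w_l := \sum_{j=1}^n X_{jl}\, g(\alpha_j)$; express the tuple $\bm{w} = (w_1,\ldots,w_b)$ as $(UX)^\top \bm{g}$, where $U \in \F_{q^n}^{\ell \times n}$ is the Vandermonde matrix $U_{rj} = \alpha_j^{r-1}$; and conclude by the analogous rank argument for $UX$.
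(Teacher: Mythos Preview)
Your approach is essentially the paper's: reduce $XG$ to $XG'$ via independence of $G'$ and $G''$, push $\varphi$ inside by $\F_q$-linearity, and write the resulting tuple as $XV\bm f$ with $V$ the Vandermonde matrix. As you correctly flag, everything hinges on the claim $\rank_{\F_{q^n}}(XV)=b$. However, your sketched argument for this claim does not go through, and in fact the claim is \emph{false} in general. Take $q=2$, $k=3$, $\ell=b=2$, and $X=\left(\begin{smallmatrix}1&0&1\\0&1&1\end{smallmatrix}\right)$, which has rank $2$ over $\F_2$; then
\[
XV=\begin{pmatrix}0&\alpha_1+\alpha_3\\0&\alpha_2+\alpha_3\end{pmatrix}
\]
has rank $1$, and correspondingly $XG'$ is supported on only $q^n$ matrices rather than $q^{2n}$. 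The flaw in your plan is that selecting $b$ independent columns of $X$ together with the fact that any $b$ rows of $V$ are independent does \emph{not} produce a $b\times b$ submatrix of the product $XV$: each entry of $XV$ sums over all $k$ intermediate indices, and the ``extra'' columns of $X$ can annihilate the contribution of the chosen ones (as they do above). The same obstruction hits your item-2 argument with $UX$.

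It is worth noting that the paper's own proof makes the identical leap---concluding that $XV$ has rank $b$ merely from $V$ being Vandermonde and $X$ having rank $b$ over $\F_{q^n}$---so the gap is inherited rather than introduced by you. In particular, the intermediate assertion ``$XG'$ alone is uniform over $\F_q^{b\times n}$ for every rank-$b$ $X$'' is not true as stated; any repair of the lemma would need an argument different from the one both you and the paper propose.
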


\begin{proof}
    We prove each item in turn, beginning with the first. Fix a matrix $Y \in \F_q^{n \times b}$, and note that it suffices to argue that $\Pr_{G'}[XG'=Y]=q^{-nb}$. Indeed, having established this one can condition on the choice of $G''$, which is probabilistically independent of $G'$, to conclude the desired result as $XG = XG' + XG''$, and so $XG' = XG-XG''$.

    Let $\by_1,\dots,\by_b \in \F_q^n$ denote the rows of $Y$. Note that the event $XG'=Y$ is equivalent to 
    \[
        \forall 1\leq j \leq b, ~~ \sum_{i=1}^k X_{ji}\varphi(f(\alpha_i)) = \by_j \ .
    \]
    By $\F_q$-linearity of $\varphi$, this is equivalent to 
    \[
        \forall 1\leq j \leq b, ~~ \varphi\left(\sum_{i=1}^k X_{ji}f(\alpha_i)\right) = \by_j \ .
    \]
    Defining $z_j := \varphi^{-1}(\by_j) \in \F_{q^n}$ for $1 \leq j \leq b$ (recall that $\varphi:\F_{q^n} \to \F_q^n$ is chosen to be full-rank, hence bijective), the above holds if and only if 
    \begin{align}\label{eq:after-varphi-inv}
        \forall 1\leq j \leq b, ~~ \sum_{i=1}^k X_{ji}f(\alpha_i) = z_j \ .
    \end{align}
    Note now that, for each $j \in [b]$,
    \begin{align*}
        \sum_{i=1}^k X_{ji}f(\alpha_i) &= \sum_{i=1}^kX_{ji} \sum_{r=0}^{\ell-1} f_r\alpha_i^r = \sum_{r=0}^{\ell-1}f_r \sum_{i=1}^k X_{ji}\alpha_i^r \ .
    \end{align*}
    Defining now the Vandermonde matrix $V \in \F_{q^n}^{k \times \ell}$ as $V_{ir} := \alpha_i^r$, and letting $\bm{f}:=(f_0,\dots,f_{\ell-1})^\top \in \F_{q^n}^\ell$ and $\bz := (z_1,\dots,z_b)^\top \in \F_{q^n}^b$, we observe that \Cref{eq:after-varphi-inv} holds if and only if 
    \begin{align*}
        XV\bm{f} = \bz \ .
    \end{align*}
    As $V$ is a Vandermonde matrix and $\alpha_1,\dots,\alpha_k$ are distinct, it is of rank $b$ over $\F_{q^n}$. Moreover, as $X$ is of rank $b$ over $\F_q$, it is also of rank $b$ over $\F_{q^n}$, for if it were of lower rank then some $b \times b$ minor would be $0$, implying that $X$ was already of rank less than $b$ over $\F_q$. Hence, the linear map defined by $XV$ from $\F_{q^n}^\ell \to \F_{q^n}^b$ is $q^{n(\ell-b)}$-to-1: in particular, it transports the uniform distribution from $\F_{q^n}^\ell$ to $\F_{q^n}^b$. As $\bm{f}$ is uniformly random over $\F_{q^n}^\ell$, it follows $XV\bm f$ is uniformly random over $\F_{q^n}^b$; in particular, the probability that $XV\bm f = \bz$ is $(q^n)^{-b} = q^{-nb}$. Hence, $\Pr[XG'=Y]=q^{-nb}$, as desired. 

    We now turn to the second item, whose proof is completely analogous. In this case, we show that for any $Y \in \F_q^{k \times b}$, $\Pr_{G''}[G''X=Y] = q^{-kb}$. For reasons analogous to before, this suffices to derive the claim. 

    Again, let $\by_1,\dots,\by_b \in \F_q^n$ denote the columns of $Y$. This time, we note that $G''X=Y$ is equivalent to 
    \[
        \forall 1 \leq j \leq b, ~~\sum_{i=1}^nX_{ij}\psi(g(\alpha_i)) = \by_j \ , 
    \]
    which by $\F_q$-linearity of $\psi$ is equivalent to 
    \[
        \forall 1 \leq j \leq b, ~~\psi\left(\sum_{i=1}^n X_{ij} g(\alpha_i)\right) = \by_j \ .
    \]
    In this case, $\psi$ is not injective. However, if $K = \ker(\psi)$ (which is the $\F_q$-span of $\{\omega_{k+1},\dots,\omega_n\}$) and $z_j \in \F_{q^n}$ are any elements for which $\psi(z_j) = \by_j$ (such elements exist as $\psi$ is surjective), the above is equivalent to the following condition:
    \begin{align}\label{eq:after-psi-inv}
        \forall 1 \leq j \leq b, ~~\sum_{i=1}^n X_{ij}g(\alpha_i) \in z_j + K \ .
    \end{align}
    Now, note that for all $j \in [b]$,
    \[
        \sum_{i=1}^n X_{ij}g(\alpha_i) = \sum_{i=1}^n \inparen{X_{ij}\sum_{r=0}^{\ell-1}g_r\alpha_i^r} = \sum_{r=0}^{\ell-1}\inparen{g_r\sum_{i=1}^n X_{ij}\alpha_i^r} \eperiod
    \]
    Hence, defining now the Vandermonde matrix $W \in \F_{q^n}^{\ell \times n}$ as $W_{ri} :=\alpha_i^r$, and letting $\bm{g}:=(g_0,\dots,g_{\ell-1}) \in \F_{q^n}^\ell$ and $\bm{K}(\bm{z}):=(K+z_1) \times(K+z_z) \times \cdots \times (K+z_n)$, we find that \Cref{eq:after-psi-inv} holds if and only if 
    \[
        \bm{g}WX \in \bm{K}(\bm{z}) \ .
    \]
    As before, we have that $WX$ is of rank $b$ (even over $\F_{q^n}$), and thus $\bm{g}WX$ is uniformly random over $\F_{q^n}^b$. As $|\bm{K}(\bm{z})| = q^{b(n-k)}$, it follows that the probability that $\bm{g}WX \in \bm{K}(\bm{z})$ is $\frac{q^{b(n-k)}}{q^{bn}} = q^{-bk}$. Hence, $\Pr_{G''}[G''X = Y] = q^{-kb}$, as claimed. 
\end{proof}

From here, the $\ell$-local similarity to RLCs for both the primal and dual code follows readily. This yields our main result for this section.

\begin{theorem} \label{thm:row-column-polynomial-codes}
    Let $\ell, k, n \in \N$ with $k \leq n$, put $R = k/n$ and let $q$ be a prime power. Let $\cC$ be a $\PCRCP(R,\ell)$ code. Then it is $\ell$-locally similar to $\RLC(R)$, and also $\cC^\perp$ is $\ell$-locally similar to $\RLC(1-R)$. 

    Furthermore, the code can be sampled with $2\ell \lceil n\log_2q\rceil$ uniformly random bits. Lastly, with probability at least $1-q^{-(1-R)n}$, the code has rate $R$. 
\end{theorem}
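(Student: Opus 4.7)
The plan is to assemble all four conclusions from the already-proved \Cref{lem:matrix-dist-6} in a manner closely parallel to how Section 3 derived \Cref{thm:PCLP-main} from \Cref{lem:linearized-poly-evals}. The technical heart of the theorem is \Cref{lem:matrix-dist-6} itself; each of the four claims below reduces either to a direct invocation of one of its two parts or to a short union bound on top of it. For that reason I do not expect any genuine obstacle in this proof.

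For the primal $\ell$-local similarity, I would invoke \Cref{prop:prob-bound-loc-sim}, which reduces the statement to the claim that for every $1 \leq b \leq \ell$ and every $A \in \F_q^{n \times b}$ of rank $b$, one has $\Pr[A \subseteq \cC] \leq q^{-n(1-R)b}$. Since $\cC = \{\bx G : \bx \in \F_q^k\}$, the event $A \subseteq \cC$ is equivalent to the existence of some $Y \in \F_q^{b \times k}$ with $YG = A^\top$, and only matrices $Y$ of rank $b$ can contribute (otherwise $\rank(YG) < b$). Part 1 of \Cref{lem:matrix-dist-6} says that for each such fixed rank-$b$ matrix $Y$, the product $YG$ is uniform on $\F_q^{b \times n}$, so $\Pr[YG = A^\top] = q^{-nb}$. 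A union bound over the at most $q^{bk}$ rank-$b$ choices of $Y$ then yields $q^{bk} \cdot q^{-nb} = q^{-n(1-R)b}$, exactly matching the proof structure of \Cref{lem:prob-matrix-contained-linearized}.

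For the dual $\ell$-local similarity, I would again invoke \Cref{prop:prob-bound-loc-sim}, this time instantiated with rate $1-R$ in place of $R$: it suffices to show $\Pr[A \subseteq \cC^\perp] \leq q^{-nRb}$ for rank-$b$ matrices $A \in \F_q^{n \times b}$. The event $A \subseteq \cC^\perp$ says that every row of $G$ is orthogonal to every column of $A$, which is exactly the matrix equation $GA = 0$. Part 2 of \Cref{lem:matrix-dist-6} tells us $GA$ is uniform on $\F_q^{k \times b}$, so $\Pr[GA = 0] = q^{-kb} = q^{-nRb}$, as required. Note that no union bound is needed this time, because the event is pinned down by a single matrix equation.

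Finally, the randomness count is immediate: the code is determined by the $2\ell$ independent uniform coefficients of $f$ and $g$ in $\F_{q^n}$, each encodable in $\lceil n\log_2 q\rceil$ bits. For the rate claim, the code has dimension exactly $k$ iff $G$ has full row rank, i.e., iff no nonzero $\bx \in \F_q^k$ satisfies $\bx G = 0$. For each such $\bx$, part 1 of \Cref{lem:matrix-dist-6} with $b = 1$ says $\bx G$ is uniform on $\F_q^n$, so $\Pr[\bx G = 0] = q^{-n}$. A union bound over the at most $q^k$ nonzero vectors $\bx$ yields a failure probability of at most $q^{k-n} = q^{-(1-R)n}$, completing the proof.
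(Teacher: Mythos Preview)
Your proposal is correct and follows essentially the same approach as the paper: both invoke \Cref{prop:prob-bound-loc-sim} and reduce the primal and dual local-similarity claims to the two parts of \Cref{lem:matrix-dist-6} via exactly the union-bound argument you describe, and both handle the rate claim by the same $b=1$ union bound. The only differences are cosmetic (e.g., you write $YG=A^\top$ with $Y\in\F_q^{b\times k}$ whereas the paper writes $XG=A$).
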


\begin{proof}
    The claim regarding the randomness complexity is immediate. As for the rate, we note that the rate of $\cC$ is less than $R$ if and only if there exists some $\bm{m} \in \F_q^k\setminus\{\bm{0}\}$ for which $\bm{m}G=\bm{0}$. Applying \Cref{lem:matrix-dist-6} with $b=1$, it follows that $\bm{m}G$ is distributed uniformly over $\F_q^n$, and hence it takes on value $\bm{0}$ with probability $q^{-n}$. Upon taking a union bound over the $q^k-1\leq q^k$ choices for $\bm{m}$, the claim follows.

    We now turn to the $\ell$-local similarity claims. Let $A \in \F_q^{n \times b}$ of rank $b$. Appealing to \Cref{prop:prob-bound-loc-sim}, it suffices to show the following two bounds:
    \begin{itemize}
        \item $\Pr[A \subseteq \cC] \leq q^{-n(1-R)b}$. 
        \item $\Pr[A \subseteq \cC^\perp] \leq q^{-nRb}$. 
    \end{itemize}
    We start with the first item. We have 
    \begin{align}
        \Pr[A \subseteq \cC] &= \Pr[\exists X \in \F_q^{k \times b} \text{ such that } XG=A] = \sum_{X \in \F_q^{k \times b}}\Pr[XG=A] \nonumber \\
        &\leq \sum_{\substack{X \in \F_q^{k \times b} \\ \rank(X)=b}} \Pr[XG=A] \label{eq:rank-condn}\\ 
        &= \sum_{\substack{X \in \F_q^{k \times b} \\ \rank(X)=b}} q^{-nb} \leq q^{kb} \cdot q^{-nb} = q^{-(1-R)nb} \label{eq:applying-lem-6.1a} \ .
    \end{align}
    In the above, as in the proof of \Cref{thm:PCLP-main}, the inequality~\eqref{eq:rank-condn} follows from the fact that if $X$ has $\rank(X)<b$, then $\rank(XG)<b$, so it can't be that $XG=A$ as $\rank(A)=b$. The first equality of \Cref{eq:applying-lem-6.1a} follows the first item of \Cref{lem:matrix-dist-6}. 

    We now address the second item. First, note that $\bx \in \cC^\perp \iff G\bx^\top = \bm{0}$, and hence $A \subseteq \cC^\perp \iff GA=0$. We have  
    \begin{align*}
        \Pr[A \subseteq \cC^\perp] = \Pr[GA=0] = q^{-kb} = q^{-Rnb} \ ,
    \end{align*}
    where we applied the second item of \Cref{lem:matrix-dist-6} in the second equality. This establishes the second claim, and hence completes the proof of the theorem. 
\end{proof}
\section{Challenge of Moving to Sublinear Randomness} \label{sec:sublinear-randomness}
In this section, we elucidate why $\Omega(n\ell\log_2(q))$ randomness, as our constructions achieve, is currently a challenging barrier to overcome. In particular, we show that this is necessary for an ensemble to be $\ell$-locally similar to random linear codes. 

\begin{proposition} \label{prop:randomness-lb}
    Let $q$ be a prime power, and let $n,\ell \in \N$ such that $\frac{n}{\log_qn} \geq \omega_{n \to \infty}(q^{2\ell})$. Let $\cC$ be a code ensemble outputting block-length $n$ codes of rate $R$ that are $\ell$-locally similar to RLCs. Then, any sampling procedure for $\cC$ requires at least 
    \[
        \lceil\ell(1-R-o_{n\to\infty}(1))n\log_2q\rceil = \Omega(n\ell\log_2q)
    \]
    bits of randomness. 
\end{proposition}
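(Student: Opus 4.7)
The plan is a granularity argument. If $\cC$ is sampled using $r$ random bits then every probability $\Pr[\cC = C]$ is a multiple of $2^{-r}$, so for every matrix $A \in \F_q^{n\times\ell}$ the quantity
\[
\Pr[A \subseteq \cC] = \sum_{C \supseteq A}\Pr[\cC = C]
\]
is itself a multiple of $2^{-r}$; hence $\Pr[A\subseteq\cC]\in\{0\}\cup[2^{-r},1]$. Exhibiting a single full-rank $A$ for which this probability is simultaneously \emph{positive} and upper-bounded by $q^{-\ell(1-R)n+o(n)}$ then immediately forces $2^{-r}\leq q^{-\ell(1-R)n+o(n)}$, which rearranges to the claimed $r \geq \lceil\ell(1-R-o_{n\to\infty}(1))n\log_2 q\rceil$.

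For the upper bound, I would apply $\ell$-local similarity (\Cref{def:local-sim}) to the empirical row-type $\tau = \Emp_A$, which is full-rank whenever $A$ is. This yields
\[
\sum_{A'\in\cM_{n,\tau}}\Pr[A'\subseteq\cC] \;\leq\; q^{(H_q(\tau)-\ell(1-R))n},
\]
and dividing by $|\cM_{n,\tau}|\geq q^{H_q(\tau)n - o(n)}$ from~\eqref{eq:estimate-on-type-class} bounds the \emph{average} probability of a matrix in this type class by $q^{-\ell(1-R)n + o(n)}$. For the positivity requirement, I would fix any $C_0 \in \supp(\cC)$; since $C_0$ has rate exactly $R$, it contains at least $(1-o(1))\,q^{R\ell n}$ linearly independent $\ell$-tuples of codewords, and any such $A\subseteq C_0$ has $\Pr[A\subseteq\cC]\geq \Pr[\cC=C_0]>0$.

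The main obstacle is reconciling these two steps: local similarity gives only an average upper bound over each type class, but granularity demands a pointwise one, while the positivity requirement constrains which $A$ we are allowed to pick. The cleanest resolution is to assume (or reduce to) the case that $\cC$ is invariant under coordinate permutations of $\F_q^n$, under which $\Pr[A\subseteq\cC]$ depends only on $\Emp_A$ and the type-wise bound becomes pointwise for free. For a general $\cC$, one applies Markov's inequality to select a positive-density subset of $\cM_{n,\tau}$ satisfying the pointwise bound, and then uses a counting argument to show that any $C_0 \in \supp(\cC)$ must contain at least one rank-$\ell$ submatrix in this subset (for some suitable full-rank $\tau$, e.g.\ close to the uniform distribution on $\F_q^\ell$). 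Tracking the $o(n)$ error terms carefully under the growth hypothesis $n/\log_q n = \omega(q^{2\ell})$ ensures they are absorbed into the slack $o_{n\to\infty}(1)\cdot n\log_2 q$ permitted by the statement.
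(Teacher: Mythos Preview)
Your core strategy---use $\ell$-local similarity to bound the average containment probability over a type class, pass to some matrix $A$ with small $\Pr[A\subseteq\cC]$, and invoke granularity---is exactly what the paper does. In fact the paper's proof is shorter than your plan precisely because it \emph{omits} the positivity check you worry about: it simply finds, by averaging over $\cM_{n,\tau}$, some $A$ with $\Pr[A\subseteq\cC]\le q^{-\ell(1-R-o(1))n}$ and then directly concludes the randomness lower bound from the existence of ``some event with probability at most'' this value, without ever verifying that the probability is nonzero. So your concern is well-placed; you have identified a step that the paper glosses over.

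That said, your proposed repair for general (non-permutation-invariant) $\cC$ does not close the gap. After Markov, the ``bad'' matrices occupy at most a $1/K$ fraction of $\cM_{n,\tau}$; to guarantee a good matrix inside a fixed $C_0\in\supp(\cC)$ you would need $|C_0^\ell \cap \cM_{n,\tau}| > |\cM_{n,\tau}|/K$. But for the near-uniform $\tau$ you suggest, $|\cM_{n,\tau}|\approx q^{\ell n}$ while $|C_0^\ell\cap\cM_{n,\tau}|\le |C_0|^\ell=q^{R\ell n}$, so the ratio is at most $q^{-\ell(1-R)n}$, and the bad set can absorb all of $C_0^\ell\cap\cM_{n,\tau}$ unless $K\ge q^{\ell(1-R)n}$---at which point the extracted pointwise upper bound becomes trivial. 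Choosing $\tau$ adapted to $C_0$ (with $H_q(\tau)$ near $R\ell$) fares better, but then one needs that almost all of $\cM_{n,\tau}$ lies in $C_0^\ell$, which is not a consequence of $C_0\in\supp(\cC)$ alone and is not supplied by your sketch. The permutation-invariant reduction is clean; the general case needs a sharper argument than the Markov-plus-counting outline you give.
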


\begin{proof}
    By assumption, it holds that for any $\tau \sim \F_q^\ell$ with $\dim(\tau)=\ell$, we must have 
    \[
        \Eop_{\cC}[|\{A\in \cM_{n,\tau} \suchthat A \subseteq \cC\}|] \leq q^{n(H_q(\tau)-\ell(1-R))} \ .
    \]
    We may choose $\tau$ so that $\tau(\bv) \cdot n \in \{0,1,\dots,n\}$ for all $\bv \in \F_q^\ell$, which implies $\cM_{n,\tau} \neq \emptyset$. Hence, by averaging there exists a matrix $A \in \cM_{n,\tau}$ for which 
    \[
        \Pr_{\cC}[A \subseteq \cC] \leq \frac{q^{n(H_q(\tau)-\ell(1-R))}}{|\cM_{n,\tau}|} \leq n^{O(q^\ell)} \cdot q^{-\ell(1-R)n} \ ,
    \]
    where the last inequality applies \Cref{eq:estimate-on-type-class}. By assumption on $n, q,\ell$, this latter term may be upper bounded by 
    \[
        q^{-\ell(1-R-o_{n\to\infty}(1))n} \ .
    \]
    As the sampling procedure for $\cC$ is such that some event occurs with probability at most $q^{-\ell(1-R-o_{n\to\infty}(1))n}$, it must be that the procedure consumes at least 
    \[
        \lceil\ell(1-R-o_{n\to\infty}(1))n\log_2q\rceil = \Omega(n\ell\log_2q)
    \]
    random bits, as claimed. 
\end{proof}

Recalling \Cref{subsec:generalized-wozencraft-ensemble}, this lower bound is in fact achievable (at least, for certain values of $R$ and $q$). 

This motivates the tantalizing open problem that we leave for future work: provide a randomized construction of a code achieving, say, the GV bound, using $o(n)$ bits of randomness. Such a code would be necessarily locally-\emph{dissimilar} from random linear codes. As a formally easier task, we view a construction with sublinear randomness lying on the GV bound as an interesting and useful stepping stone towards the task of providing an explicit construction.

\bibliographystyle{alpha}
\bibliography{refs}

\end{document}